\theoremstyle{plain}
\newtheorem{lemma}{Lemma}[section]
\numberwithin{equation}{section}
\numberwithin{figure}{section}
\numberwithin{table}{section}
\title{Calibrating Local Volatility Models with Stochastic Drift and
Diffusion\footnote{Electronic version of the article published as International
Journal of Theoretical and Applied Finance, Volume No. 25, Issue No. 02, Article
No. 2250011, Year 2022 \url{https:///dx.doi.org/10.1142/S021902492250011X}
\copyright World Scientific Publishing Company
\url{https://www.worldscientific.com/worldscinet/ijtaf}}\vspace{1em}}
\author{\Large Orcan \"Ogetbil\footnote{orcan.ogetbil@wellsfargo.com}, \ 
Narayan Ganesan\footnote{narayan.ganesan@wellsfargo.com}, \ 
and Bernhard Hientzsch\footnote{bernhard.hientzsch@wellsfargo.com}\\
{\small Corporate Model Risk, Wells Fargo Bank}}
\date{} 
\begin{document}

\maketitle
\setcounter{secnumdepth}{3}
\begin{abstract}

We propose Monte Carlo calibration algorithms for three models: local volatility
with stochastic interest rates, stochastic local volatility with deterministic
interest rates, and finally stochastic local volatility with stochastic interest
rates. For each model, we include detailed derivations of the corresponding SDE
systems, and list the required input data and steps for calibration. We give
conditions under which a local volatility can exist given European option
prices, stochastic interest rate model parameters, and correlations. 
The models are posed in a foreign exchange setting. 
The drift term for the exchange rate is given as a difference of two stochastic
short rates, domestic and foreign, each modeled by a Gaussian one-factor model
with deterministic shift (G1++) process.
For stochastic volatility, we model the variance for the exchange rate by a
Cox-Ingersoll-Ross (CIR) process. We include tests to show
the convergence and the accuracy of the proposed algorithms.

\end{abstract}

\section{Introduction}

In quantitative finance, considerable amount of research focuses on modeling the
market-observed smile of the implied volatility surface.
There are competing approaches to tackle this problem. Most notably, stochastic
volatility and local volatility models are often applied in practice to imitate
the market-observed smile. Stochastic volatility models aim to capture the
volatility dynamics observed in the market. While such models often capture 
the implied volatilities in certain tenor and strike ranges well, there are
often ranges that are not repriced well. Moreover, the parametric structure of
these models makes proper calibration computationally challenging.
The local volatility models are simpler due to their non-parametric nature; and
they are constructed to fit completely arbitrage-free implied volatility
surfaces. However, they fail to capture the proper dynamics of implied
volatilities, as they construct a static local volatility surface, which
inherently assumes deterministic spot volatility for the underlying process.
These shortcomings lead practitioners to combine them in unified frameworks,
called \emph{stochastic local volatility} models, with the intention of
combining the advantages of both approaches. The combined model is of special
interest to the financial industry, as it would provide a methodology to work
with a more complete set of risk factors associated with exotic instruments,
such as options on exchange rates, foreign stocks, quantoes and baskets.

Dupire's local volatility model \cite{Dupire1994,
DermanKani1994} constructs a unique diffusion process that is consistent with
the European vanilla option market quotes. Being calibrated to vanilla option
quotes, the local volatility surface has become a standard tool to capture the
risk-neutral marginal distributions of the underliers implied by market European
option price quotes, and is being utilized by
practitioners for pricing and risk-assessment of more exotic instruments. In its
original formulation, the local volatility model assumes deterministic drift and
diffusion terms. In our work, we relax both of these constraints in turn, and
study the generalizations of the local volatility model first under a drift that
is the difference between two stochastic short rates, then under stochastic
diffusion, and eventually under both stochastic drift and stochastic diffusion.
While we base our set up in foreign exchange context with the foreign exchange
rate, the domestic and base short rates as the risk drivers, our results can be
applied in equity or commodity contexts. The theoretical setup and algorithms
presented in the paper can be simplified with minimal effort to the cases where
the drift is modeled more simplistically with a single short rate driver.

The theoretical framework for extending the local volatility model with
a single stochastic rate was presented in \cite{Atlan2006}. Calibration of this
model using finite difference and Monte Carlo (MC) methods was discussed in
\cite{Hu2015, HokTan2018}. As a further extension, \cite{Deelstra2012} studies
the model with two stochastic interest rates.

Stochastic volatility models embody a spot volatility or variance process that
is correlated to the underlying asset. Popular choices are the Heston model
\cite{Heston1993} with variance following a Cox-Ingersoll-Ross (CIR) process
\cite{CIR1985}, and the Stein-Stein model \cite{SteinStein1991,
SchoebelZhu1999} with volatility following an arithmetic Ornstein-Uhlenbeck process
\cite{UhlenbeckOrnstein1930}. Both models admit closed form solutions, at least
in their constant parameter formulations, for pricing European vanilla options.
\cite{Atlan2006} proposes an extension to stochastic volatility models with a
stochastic interest rate. A more common extension is to join
the local volatility with the stochastic volatility or variance in the diffusion
term of the stochastic differential equation (SDE) to form a stochastic local
volatility model. In the literature, the theoretical framework for stochastic
local volatility has been studied in several forms with varying degrees of model
complexity. The ``unified theory of volatility'' from \cite{Dupire1996} imposes
dynamics on local variances, whereas \cite{AlexanderNogueira2004} focus on
modeling the stochastic evolution of the local volatility surface rather than of
the spot volatility. \cite{Lipton2002} proposes a
further extension to incorporate jumps in the model. Calibration
algorithms for several forms of stochastic local volatility models have been
introduced. \cite{Jex1999} implement a trinomial tree whereas, \cite{GuyonHL2013}
utilize particle methods on McKean nonlinear SDEs on general volatility models,
\cite{RenMadanQian2007, SaporitoYangZubelli2017} solve a forward Kolmogorov
PDE. The bootstrapping Monte Carlo method proposed by \cite{Labordere2009}
simplifies the model by Markovian projection. Our approach of simulating the
full stochastic local volatility model in our work paves the way to further
extend the model to incorporate stochastic rates.

We use a number of acronyms for the models under consideration for easy
reference. The standard Dupire local volatility model where the drift is the
difference between two deterministic rates is denoted by LV2DR. The stochastic
rates extension of this model is LV2SR, whereas the stochastic local volatility
extension is SLV2DR. The full model with stochastic rates and stochastic local
volatility is referred to as SLV2SR.

While this paper reviews and makes use of existing findings and algorithms in
part, to our knowledge, the following are our novel contributions:
SLV2SR model; complete derivations of the SDE systems that are used during Monte
Carlo calibration for all three models: LV2SR, SLV2DR, SLV2SR; presentation of
Monte Carlo calibration algorithms for the SLV2DR and SLV2SR models; and
convergence and accuracy studies of all calibration algorithms presented.

The structure of this paper is as follows. We end this section by reviewing
the theoretical foundations that our models will be based on.
In Section \ref{sec:LV2SR}, we give a detailed description of the algorithm to
calibrate the local volatility model subject to stochastic interest rates
(LV2SR), including derivations of the relevant formulas and SDEs in forward
measure. Section \ref{sec:SLV2DR} describes the
two-fold calibration algorithm of the stochastic local volatility model
(SLV2DR):
First the underlying Heston model is calibrated to match near at-the-money forward
(ATMF) market quotes; then the leverage function is calibrated to the entire
implied volatility surface. Section \ref{sec:SLV2SR} blends the algorithms from
the previous sections to develop a calibration algorithm for our ultimate model:
stochastic local volatility with stochastic interest rates (SLV2SR). As the
models being discussed are built on top of each other, the
calibration algorithm for the SLV2SR model references and uses the
results of the algorithms for calibrating the simpler models in earlier
sections. In this paper stochastic interest rates are modeled by Gaussian
one-factor model with deterministic shift (G1++) short rate processes. Other
short rate models can be adapted to our framework as well, as our methodology
is not tightly coupled with the
choice for the type of the short rate process. Each of these sections are
accompanied by tests that measure the convergence and the accuracy of the
calibrated models. We use EURUSD market data as of 2020-04-30 in our calibration
and simulation tests.
In Section \ref{sec:conclusion} we summarize our findings and
exhibit a study to compare the calibrated models.
\ref{sec:intrsds} proves a few technical lemmas and provides some computations
used in the paper.

We work in the standard probability space setting that can be found in most
mathematical finance textbooks. Here we briefly discuss the relevant constructs
for the sake of clarity, and for detail we refer to \cite{Shreve2004}.
We consider the measurable space $(\Omega, \mathcal{F})$, where
the sample space $\Omega$ with outcome elements $\omega$ governs the
uncertainty; $\mathcal{F}$, as a $\sigma$-algebra on $\Omega$, controls
information availability. Together with a probability measure $\mathbb{P}$, they
form the probability space $(\Omega, \mathcal{F}, \mathbb{P})$. The expectation
operator for the measure $\mathbb{P}$ is denoted by $\mathbf{E}^{\mathbb{P}}$.
The filtration $\{\mathcal{F}_t, t \geq 0 \}$, 
defined as a set over subalgebras $\mathcal{F}_t$ of $\mathcal{F}$ satisfying
$\mathcal{F}_s \subseteq \mathcal{F}_t$ for $s \leq t$, models the information
arrivals at different times.

We consider the setup where the filtration 
$\{\mathcal{F}_t, t \geq 0 \}$
is generated by $M$ independent Brownian motions $(W^m_t),
m=1,\ldots,M$, under $\mathbb{P}$.
We can formulate a model with $\mathcal{F}_t$-measurable stochastic processes
$(X^n_t)$ that are driven by $(W^m_t)$
\begin{equation*}
dX^n_t = \mu^n(t, \omega) dt + \sum_{m=1}^M \sigma^{nm}(t, \omega) dW^m_t,
\ n=1, \ldots , N,
\end{equation*}
by means of $\mathcal{F}_t$-measurable drift $\mu^n(t, \omega)$ and diffusion
$\sigma^{nm}(t, \omega)$ functions that for all $t$ satisfy the
integrability conditions
\begin{equation*}
\begin{split}
\int_0^t |\mu^n(s, \omega)| ds <& \infty,\\
\int_0^t \sum_{m=1}^M |\sigma^{nm}(s, \omega)|^2 ds <& \infty,
\end{split}
\end{equation*}

almost surely.
The stochastic processes that we are primarily considering in this paper are
exchange rate, stochastic volatility/variance, domestic and foreign interest
rates.
Other tradable assets we consider are domestic and foreign money market
accounts, which as num\'eraires define the domestic and foreign risk neutral
measures $\mathbb{Q}^{\text{DRN}}$ and $\mathbb{Q}^{\text{FRN}}$ respectively;
and domestic zero coupon bond maturing at future time $T$, which as
num\'eraire defines the domestic $T$-forward measure $\mathbb{Q}^{\text{T}}$.

The discounted asset values are martingales under the
risk neutral measure they are denominated in. It is convenient to formulate the
stochastic processes we will study in risk neutral measures, in which they
can be written in their canonical forms, driven by a single Brownian motion.
For example, a process $(X^n_t)$ is driven by the Brownian motion
$(W^{n\text{(DRN)}}_t)$ under the domestic risk neutral measure as
\begin{equation*}
dX^n_t = \tilde\mu^n(t, \omega) dt + \tilde\sigma^{n}(t, \omega)
dW^{n\text{(DRN)}}_t.
\end{equation*}
\begin{table}[ht!]
\begin{center}
\caption{The processes and the corresponding correlated Brownian motions under
the domestic risk-neutral measure}\label{tbl:allmodels}
\begin{tabular}{lcll}
 \hline \\[-12pt]
  & \multicolumn{1}{c}{$N\ (=M)$} &
 \multicolumn{1}{c}{$\{X^n_t\}$} &
 \multicolumn{1}{c}{$\{W^{n\text{(DRN)}}_t\}$}\\
 \hline & \\[-12pt]
LV2DR & 1 & $S_t$ & $W^{S\text{(DRN)}}_t$\\
LV2SR & 3 & $S_t, r^d_t, r^f_t$ & $W^{S\text{(DRN)}}_t,
W^{d\text{(DRN)}}_t, W^{f\text{(DRN)}}_t$\\
SLV2DR & 2 & $S_t, U_t$ & $W^{S\text{(DRN)}}_t, W^{U\text{(DRN)}}_t$\\
SLV2SR & 4 & $S_t, U_t, r^d_t, r^f_t$ & $W^{S\text{(DRN)}}_t,
W^{U\text{(DRN)}}_t, W^{d\text{(DRN)}}_t, W^{f\text{(DRN)}}_t$\\
 \hline
\end{tabular}
\end{center}
\end{table}
In this formulation the Brownian motions $(W^{n\text{(DRN)}}_t)$ of each process
$(X^n_t)$ are not necessarily uncorrelated. Note that any model $(X^n_t), n=1,
\ldots, N$, with correlated Brownian motions can be rewritten as a model with
uncorrelated Brownian motions, absorbing any correlation into the diffusion
terms. In this paper we specify the stochastic processes for the exchange
rate $(S_t)$ under various models. LV2SR and SLV2SR models specify the domestic
short rate $(r^d_t)$ and the foreign short rate
$(r^f_t)$ processes; and SLV2DR and SLV2SR models further specify the exchange
rate variance process $(U_t)$. Under domestic risk neutral measure
$\mathbb{Q}^{\text{DRN}}$, these processes are driven by the correlated Brownian
motions $(W^{S\text{(DRN)}}_t)$, $(W^{d\text{(DRN)}}_t)$,
$(W^{f\text{(DRN)}}_t)$, and $(W^{U\text{(DRN)}}_t)$, respectively. Table
\ref{tbl:allmodels} summarizes the models under consideration. A sample implementation of
the calibration algorithms presented in the following sections can be found at
the Github repository \url{https://github.com/oge-t/subtle_smile}.

\section{Local Volatility Model with Stochastic Rates (LV2SR)}\label{sec:LV2SR}
\subsection{Setup}

Let $(S_t)$ be the exchange rate, that is the amount of domestic currency needed
to buy one unit of foreign currency. In the domestic risk neutral measure
$\mathbb{Q}^{\text{DRN}}$ the exchange rate is assumed to follow the LV2SR local
volatility model,
\begin{equation}
dS_t = \left[r^d_t - r^f_t \right] S_t dt + \sigma(S_t, t) S_t
dW^{S\text{(DRN)}}_t,
\label{eqn:dS_LV_DRN}
\end{equation}
where $\sigma(S_t, t) > 0$ is the state dependent diffusion coefficient that is
commonly referred to as \emph{local volatility}, and $(S_t)$ is denominated in
domestic currency. The domestic short rate $(r^d_t)$ and the foreign short rate
$(r^f_t)$ follow G1++ processes. In particular, the domestic short rate
evolves in domestic risk neutral measure as
\begin{equation}
\begin{split}
r^d_t &= x^d_t + \phi^d_t, \\
dx^d_t &= -a^d_t x^d_t dt + \sigma^d_t dW^{d\text{(DRN)}}_t,
\label{eqn:drd_G1PP_DRN}
\end{split}
\end{equation}
whereas the foreign short rate evolves in foreign risk neutral measure
$\mathbb{Q}^{\text{FRN}}$ as
\begin{equation}
\begin{split}
r^f_t &= x^f_t + \phi^f_t, \\
dx^f_t &= -a^f_t x^f_t dt + \sigma^f_t dW^{f\text{(FRN)}}_t. \label{eqn:dx_fFRN}
\end{split}
\end{equation}
Here $\phi^i_t$ are the shift functions that are calibrated to market yield
curves; $a^i_t \geq 0$ are the mean reversion coefficients, and $\sigma^i_t > 0$
are the volatility coefficients, with $i=d, f$.

Our derivations and computations assume constant coefficients of correlation
between the returns of the underlying assets; however we note that it is
straightforward to generalize our findings to more advanced models with
time-dependent or even stochastic coefficients of correlation.

\subsubsection{Domestic risk neutral measure}
The first step is to derive the evolution of the foreign short rate in the
domestic risk neutral measure. For now assume that the evolution of the foreign
short rate in domestic risk neutral measure has the form
\begin{equation}
dx^f_t = g(\cdot, t) dt + \sigma^f_t dW^{f\text{(DRN)}}_t
\label{eqn:dx_fDRNimplicit}
\end{equation}
for some drift function $g(\cdot, t)$ of the underlying assets of the SDE
system and time that we are going to determine.

For any asset $V_t$ denominated in domestic
currency, the discounted asset price is a martingale under the domestic risk
neutral measure. Defining the domestic money market account as $B_t^d =
\exp[\int_0^t r_u^d du]$, we have
\begin{equation*}
\frac{V_0}{B_0^d} = V_0
= \mathbf{E}^{\mathbb{Q}^{\text{DRN}}}\left[\frac{V_t}{B_t^d}\right].
\end{equation*}
Likewise, the discounted value of $\frac{V_t}{S_t}$, that is the price of the
asset $V_t$ denominated in the foreign currency, is a martingale under the
foreign risk neutral measure. Defining the foreign money market account as
$B_t^f = \exp[\int_0^t r_u^f du]$, we have
\begin{equation*}
\frac{V_0}{S_0 B_0^f} = \frac{V_0}{S_0}
= \mathbf{E}^{\mathbb{Q}^{\text{FRN}}}\left[\frac{V_t}{S_t B_t^f}\right].
\end{equation*}
Therefore the Radon-Nikodym derivative \cite{Shreve2004} writes
\begin{equation}
\frac{d\mathbb{Q}^{\text{FRN}}}{d\mathbb{Q}^{\text{DRN}}} =
\frac{\frac{V_t}{B^d_t} \frac{V_0}{S_0}}{V_0 \frac{V_t}{S_t B^f_t}} =
\frac{S_t B^f_t}{S_0 B^d_t}. \label{eqn:RN1}
\end{equation}
The exchange rate process (\ref{eqn:dS_LV_DRN}) is an extension of the geometric
Brownian motion SDE with time-dependent coefficients $r^d_t$, $r^f_t$, and
time- and space-dependent coefficient $\sigma(S_t, t)$, with the solution that
describes its evolution from its spot value $S_0$,
\begin{equation*}
\begin{split}
S_t =& S_0 \exp\left[ \int_0^t
\left( r^d_u - r^f_u - \frac{\sigma^2(S_u, u)}{2} \right) du +
\int_0^t \sigma(S_u, u) dW^{S\text{(DRN)}}_u\right]\\
=&S_0 \frac{B^d_t}{B^f_t} \exp\left[ -\frac{1}{2}\int_0^t
\sigma^2(S_u, u) du +
\int_0^t \sigma(S_u, u) dW^{S\text{(DRN)}}_u\right].
\end{split}
\end{equation*}
Thus, the Radon-Nikodym derivative (\ref{eqn:RN1}) becomes
\begin{equation}
\frac{d\mathbb{Q}^{\text{FRN}}}{d\mathbb{Q}^{\text{DRN}}} =
\exp\left[ -\frac{1}{2}\int_0^t
\sigma^2(S_u, u) du +
\int_0^t \sigma(S_u, u) dW^{S\text{(DRN)}}_u\right].
\end{equation}
Then, according to the Girsanov theorem
\begin{equation}
dW^{S\text{(FRN)}}_t = dW^{S\text{(DRN)}}_t - \sigma(S_t, t) dt
\label{eqn:dW_SFRN_SDRN}
\end{equation}
is a Brownian motion under the foreign risk neutral measure
$\mathbb{Q}^{\text{FRN}}$.

Let $\rho_{Sf}$ be the coefficient of correlation between the Brownian motions
$(W^{S\text{(FRN)}}_t)$ and $(W^{f\text{(FRN)}}_t)$, that is
$
d\left<W^{S\text{(FRN)}}, W^{f\text{(FRN)}} \right>_t = \rho_{Sf} dt.
$
Following Lemma \ref{lemma:measure_trf}, the foreign short rate process
(\ref{eqn:dx_fDRNimplicit}) evolves in foreign risk neutral measure
$\mathbb{Q}^{\text{FRN}}$ as
\begin{equation}
dx^f_t = \left[ g(\cdot, t) + \rho_{Sf} \sigma^f_t \sigma(S_t, t) \right] dt +
\sigma^f_t dW^{f\text{(FRN)}}_t.\label{eqn:dx_fFRNimplicit}
\end{equation}
Comparing (\ref{eqn:dx_fFRN}) and (\ref{eqn:dx_fFRNimplicit}), we find that
\begin{equation*}
g(\cdot, t) = -a^f_t x^f_t - \rho_{Sf} \sigma^f_t \sigma(S_t, t).
\end{equation*}

Collectively, we can write the three processes in the domestic risk neutral
measure as
\begin{equation}
\begin{split}
dS_t =& \left[r^d_t - r^f_t \right] S_t dt + \sigma(S_t, t) S_t
dW^{S\text{(DRN)}}_t,\\
dx^d_t =& -a^d_t x^d_t dt + \sigma^d_t dW^{d\text{(DRN)}}_t,\ r^d_t = x^d_t +
\phi^d_t,\\
dx^f_t =& \left[ -a^f_t x^f_t - \rho_{Sf} \sigma^f_t \sigma(S_t, t) \right] dt +
\sigma^f_t dW^{f\text{(DRN)}}_t,\ r^f_t = x^f_t + \phi^f_t. \label{eqn:SDEs_DRN}
\end{split}
\end{equation}

\subsubsection{Domestic $T$-forward measure}\label{sec:Tforward}

For computational ease, that is to decouple the discounting terms from
expectations, we would like to transform (\ref{eqn:SDEs_DRN}) to the
domestic $T$-forward measure. We take the zero coupon bond $P^d(t, T)$ maturing
at time $T$ as the num\'{e}raire. We have $P^d(T, T) = 1$. Under the measure
$\mathbb{Q}^{\text{T}}$ defined by this num\'{e}raire, the discounted price of
an asset is a martingale,
\begin{equation*}
\frac{V_0}{P^d(0, T)}
= \mathbf{E}^{\mathbb{Q}^{\text{T}}}\left[\frac{V_T}{P^d(T, T)}\right]
= \mathbf{E}^{\mathbb{Q}^{\text{T}}}\left[V_T\right].
\end{equation*}
We arrive at the following Radon-Nikodyn derivative,
\begin{equation}
\frac{d\mathbb{Q}^{\text{T}}}{d\mathbb{Q}^{\text{DRN}}} =
\frac{\frac{V_T}{B^d_T} \frac{V_0}{P^d(0, T)}}{V_0 V_T} =
\frac{1}{B^d_T P^d(0, T)} =
\frac{\exp{\left[ - \int_0^T r^d_u du\right]}}{P^d(0, T)}.
\end{equation}
Since the domestic short rate follows a G1++ process, this expression can be
written as (see Lemma \ref{lemma:g1pp_identity} for proof)
\begin{equation}
\frac{d\mathbb{Q}^{\text{T}}}{d\mathbb{Q}^{\text{DRN}}} =
\exp\left[- \int_0^T \sigma^d_u b^d(u, T) dW^{d\text{(DRN)}}_u
-\frac{1}{2} \int_0^T \left(\sigma^d_u b^d(u, T)\right)^2 du
\right],\label{eqn:RN_dT_dDRN}
\end{equation}
with
\begin{equation*}
b^d(t, T) \equiv \int_t^T e^{- \int_t^v a^d_z dz} dv.
\end{equation*}
Then, by Girsanov theorem
\begin{equation}
dW^{d\text{(T)}}_t = dW^{d\text{(DRN)}}_t + b^d(t, T) \sigma^d_t dt
\label{eqn:dW_dT_dDRN}
\end{equation}
is a Brownian motion under the domestic $T$-forward measure
$\mathbb{Q}^{\text{T}}$.
This allows us to write down the domestic short rate process
(\ref{eqn:drd_G1PP_DRN}) as
\begin{equation}
dx^d_t = \left[-a^d_t x^d_t - b^d(t, T) (\sigma^d_t)^2 \right]
dt + \sigma^d_t dW^{d\text{(T)}}_t.
\end{equation}

Let $\rho_{Sd}$ be the coefficient of correlation between the Brownian motions
$(W^{S\text{(DRN)}}_t)$ and $(W^{d\text{(DRN)}}_t)$, that is
$
d\left<W^{S\text{(DRN)}}, W^{d\text{(DRN)}} \right>_t = \rho_{Sd} dt.
$
Following Lemma \ref{lemma:measure_trf}, the exchange rate process
(\ref{eqn:dS_LV_DRN}) evolves in domestic $T$-forward measure
$\mathbb{Q}^{\text{T}}$ as
\begin{equation}
dS_t = \left[ r^d_t - r^f_t - \rho_{Sd} b^d(t, T) \sigma^d_t \sigma(S_t, t)
\right] S_t dt + \sigma(S_t, t) S_t dW^{S\text{(T)}}_t.\label{eqn:LV2SR_ST}
\end{equation}

Finally, let $\rho_{df}$ be the coefficient of correlation between the Brownian
motions $(W^{d\text{(DRN)}}_t)$ and $(W^{f\text{(DRN)}}_t)$, that is
$
d\left<W^{d\text{(DRN)}}, W^{f\text{(DRN)}} \right>_t = \rho_{df} dt.
$
Following Lemma \ref{lemma:measure_trf}, the foreign short rate process
from (\ref{eqn:SDEs_DRN}) evolves in domestic $T$-forward measure
$\mathbb{Q}^{\text{T}}$ as
\begin{equation}
dx^f_t = \left[ -a^f_t x^f_t - \rho_{Sf} \sigma^f_t \sigma(S_t, t) 
- \rho_{df} b^d(t, T) \sigma^d_t \sigma^f_t \right] dt +
\sigma^f_t dW^{f\text{(T)}}_t.
\end{equation}

Collecting everything,
\begin{equation}
\begin{split}
dS_t =& \left[ r^d_t - r^f_t - \rho_{Sd} b^d(t, T) \sigma^d_t \sigma(S_t, t)
\right] S_t dt + \sigma(S_t, t) S_t dW^{S\text{(T)}}_t,\\
dx^d_t =& \left[-a^d_t x^d_t - b^d(t, T) (\sigma^d_t)^2 \right] dt + \sigma^d_t
dW^{d\text{(T)}}_t,\ r^d_t = x^d_t + \phi^d_t,\\
dx^f_t =& \left[ -a^f_t x^f_t - \rho_{Sf} \sigma^f_t \sigma(S_t, t) 
- \rho_{df} b^d(t, T) \sigma^d_t \sigma^f_t \right] dt +
\sigma^f_t dW^{f\text{(T)}}_t, \ r^f_t = x^f_t + \phi^f_t\label{eqn:SDEs_T}
\end{split}
\end{equation}
describe the evolutions of the exchange rate, domestic short rate, and foreign
short rate processes under the domestic $T$-forward measure
$\mathbb{Q}^{\text{T}}$. Note that the above SDE system is different than what
is given in \cite{Deelstra2012}.

\subsection{Calibration of Local Volatility}\label{sec:calib_lv_sr}

The standard formulation of the local volatility model \cite{Dupire1994,
DermanKani1994} with deterministic interest rates has been studied extensively
in the literature. The local volatility surface can be computed from a
call option price surface that can be constructed by market quotes of call
option prices $C = C(K, T)$ as \cite{Gatheral2012}
\begin{equation}
   \sigma_{\text{LV (deterministic rates)}}^2 = \frac{\frac{\partial C}{\partial
   T} + (r^d_T - r^f_T) K \frac{\partial C}{\partial K} +
   r^f_T C}{\frac{1}{2}K^2 \frac{\partial^2C}{\partial K^2}}.
   \label{eqn:dupire_C_deterministic_ir}
\end{equation}
When the interest rates have stochastic dynamics, the above equation generalizes
to \cite{Ogetbil2020}
\begin{equation}
   \sigma_{\text{LV (stochastic rates)}}^2 = \frac{\frac{\partial C}{\partial T}
   - P^d(0, T) \mathbf{E}^{\mathbb{Q}^{\text{T}}}\left[(K r_T^d - S_T r_T^f) \mathds{1}_{S_T > K}\right]}
   {\frac{1}{2}K^2 \frac{\partial^2C}{\partial K^2}}.
   \label{eqn:dupire_C_stochastic_ir}
\end{equation}
Here $P^d(0, T)$ is the time zero value of a zero coupon bond expiring at time
$T$, which can be extracted from the input domestic discount factor curve. The
expectation is taken under the domestic $T$-forward measure
$\mathbb{Q}^{\text{T}}$.

One can show that when the rates $(r^d_t)$, $(r^f_t)$ are deterministic
($\sigma^d_t \rightarrow 0, \sigma^f_t \rightarrow 0$),
(\ref{eqn:dupire_C_stochastic_ir}) reduces to
(\ref{eqn:dupire_C_deterministic_ir}). In the stochastic case,
however, this expectation does not have a known analytical solution and needs to
be evaluated numerically. Below, we demonstrate our methods for evaluating this
expectation by Monte Carlo simulation.

The evaluation of the local volatility requires the construction of the
\emph{call price surface} interpolator in this formulation. The interpolator
must be able to evaluate the partial derivatives appearing in the above local
volatility expressions\footnote{In our implementation, the interpolator is
constructed as clamped cubic spline in strike direction, and linear spline
in time direction. The derivatives are computed analytically on the
splines. Expiries are uniformly spaced. At each expiry the strikes range 3.5
standard deviations (based on the corresponding implied volatility) away from
at-the-money-forward strike, and are uniformly spaced in log-moneyness.}.
This surface and its $t$- and $K$-derivatives will be evaluated at a given grid
to generate the local volatility surface using equations
(\ref{eqn:dupire_C_deterministic_ir}) or (\ref{eqn:dupire_C_stochastic_ir}).

As the FX volatility market data is usually given in terms of risk-reversals and
butterflies, and represented as some \emph{implied volatility surface} form
with appropriate interpolation, we choose to write the Dupire's equations in
the total implied variance parametrization.
Comparing the calibration routines on various sets of initial market data, we
found that in the wings of the surface, the total implied variance formulation
typically performs better than the call surface formulation. In
practice, market data is usually available in the form of parametrized or dense
implied volatility surfaces that are calibrated with such penalty functions that
force the total implied variance to increase monotonically as a function of
time. By construction, interpolating the total implied variance surface and
using these values in the Dupire formula avoids calendar spread arbitrage.

Following the conventions of \cite{Gatheral2012},
the total implied variance $w$ is parametrized in terms of the log-moneyness
$y(K, T) = \log\frac{K}{F_T}$ and time $T$, with
forward asset price $F_T = S_0 \frac{P^f(0, T)}{P^d(0, T)}$, as $w(y(K,T),T) =
\Sigma(K,T)^2 T$ where $\Sigma(K,T)$ is the market implied volatility at strike
$K$ and maturity $T$.

In the deterministic interest rates case, the Dupire's equation
(\ref{eqn:dupire_C_deterministic_ir}) reduces to \cite{Gatheral2012}
\begin{equation}
   \sigma_{\text{LV (deterministic rates)}}^2 = \frac{\frac{\partial w}{\partial
   T}} {1 - \frac{y}{w}\frac{\partial w}{\partial y} + \frac{1}{2} \frac{\partial^2
   w}{\partial y^2} + \frac{1}{4} \left(\frac{\partial w}{\partial y}\right)^2
   \left(-\frac{1}{4} - \frac{1}{w} + \frac{y^2}{w^2}\right)}.
   \label{eqn:dupire_tiv_deterministic_ir}
\end{equation}
Meanwhile, in the stochastic interest rates case, the equation
(\ref{eqn:dupire_C_stochastic_ir}) becomes \cite{Ogetbil2020}
\begin{equation}
   \sigma_{\text{LV (stochastic rates)}}^2 = \frac{
\frac{\partial C_{\text{BS}}}{\partial T}
- P^d(0, T) \mathbf{E}^{\mathbb{Q}^{\text{T}}}\left[(K r_T^d - S_T r_T^f) \mathds{1}_{S_T > K}\right]}
{\frac{\partial C_{\text{BS}}}{\partial w} \left[
1 - \frac{y}{w} \frac{\partial w}{\partial y} + \frac{1}{2} \frac{\partial^2 w}{\partial y^2}
+\frac{1}{4} \left(\frac{\partial w}{\partial y}\right)^2
\left(-\frac{1}{4}- \frac{1}{w} + \frac{y^2}{w^2}\right)
\right]},\label{eqn:dupire_tiv_stochastic_ir}
\end{equation}
where the Black-Scholes model price $C_{\text{BS}}=C_{\text{BS}}(P^d(0, T) F_T, y, w)$ and
its derivatives are given by
\begin{equation}
\begin{split}
C_{\text{BS}}(P^d(0, T) F_T, y, w) =& P^d(0, T) F_T \left[N(d_1) - e^y N(d_2)\right],\\
   \frac{\partial C_{\text{BS}}}{\partial w} =& \frac{1}{2} P^d(0, T) F_T e^y
 N'(d_2) w^{-\frac{1}{2}},\\
   \frac{\partial C_{\text{BS}}}{\partial y} =& -P^d(0, T) F_T e^y N(d_2),\\
   \frac{\partial C_{\text{BS}}}{\partial T} =& - f^f(0, T) C_{\text{BS}}
+ \frac{\partial C_{\text{BS}}}{\partial w} \frac{\partial w}{\partial T}\\
&+\left(\frac{\partial C_{\text{BS}}}{\partial y}
+ \frac{\partial C_{\text{BS}}}{\partial w} \frac{\partial w}{\partial y}\right) (f^f(0, T)
- f^d(0, T)).
\end{split}\label{eqn:CBS_first_derivatives}
\end{equation}
Here the instantaneous forward rate is defined as
$f^i(0, T) \equiv - \frac{\partial \log P^i(0, T) }{\partial T}
= - \frac{1}{P^i(0, T) }\frac{\partial P^i(0, T) }{\partial T}$, with $i=d,f$.
Moreover $N(\cdot)$ is the cumulative Gaussian probability distribution
function; $d_1 = -y w^{-\frac{1}{2}} + \frac{1}{2} w^{\frac{1}{2}},$ and $d_2 =
d_1 - w^{\frac{1}{2}}$. $r_T^d$ and $r_T^f$ are the time $T$ values of the
domestic and foreign short rates.
Similar to the call price surface formulation, equation
(\ref{eqn:dupire_tiv_stochastic_ir}) reduces to
(\ref{eqn:dupire_tiv_deterministic_ir}) in the deterministic interest rates
limit. Analogous to the call price formulation case, the evaluation of the local
volatility requires the construction of the \emph{total implied variance
surface}.

Dupire's equations (\ref{eqn:dupire_C_stochastic_ir}) and
(\ref{eqn:dupire_tiv_stochastic_ir}) use the local volatility on both sides
since the computation of the expectation on the right hand side is under the
dynamics that involves the local volatility function. This can be used to define
iterative approaches. We found that the bootstrapping approach presented below
typically yields satisfactory results without any iterative refinement.

\paragraph{Inputs for calibration}

Our calibration routine expects the following quantities as input for local
volatility calibration:
\begin{itemize}
\item Spot FX rate $S_0$
\item Market implied volatility $\Sigma(K, t)$ for FX rate
\item Market yield curves $P^d(0, t)$ and $P^f(0, t)$
\item For both domestic and foreign rates, G1++ model parameters mean
reversion, volatility and shift function calibrated to market data\footnote{See
\cite{GurrieriNakabayashiWong2009} for example calibration methods for both
constant and time dependent cases.}
\item Coefficients of correlation between the underlying assets: the FX rate,
the domestic and foreign short rates
\end{itemize}

\paragraph{Steps for calibration}

In our framework, we calibrate the local volatility surface time slice by
time slice, in a bootstrapping fashion. Let $t_i; i=1, \ldots, n$ be the
increasing sequence of (positive) times where we will perform the calibration.
\begin{enumerate}
  \item Using the market implied volatility $\Sigma(K, t)$, generate a vanilla
  call option price surface $C(K, t)$ interpolator or a total implied variance
  surface $w(y, t)$ interpolator. The interpolator must be able to compute the
  partial derivatives appearing in the local volatility expressions.
  \item For the first time slice $t_1$, evaluate the
  deterministic equation (\ref{eqn:dupire_C_deterministic_ir}) or
  (\ref{eqn:dupire_tiv_deterministic_ir}) to compute the FX local volatilities
  for a predetermined range of strikes. This step requires no Monte Carlo
  simulation. As a result, obtain local volatility values to be used in
  the simulation until time $t_2$ in the subsequent calibration steps.
  \item For each of the subsequent time slices $t_j, j > 1$, simulate the SDE
  system (\ref{eqn:SDEs_T}) up to time $t_j$. Compute the Monte Carlo
  estimate for the expectation appearing in (\ref{eqn:dupire_C_stochastic_ir}) or
  (\ref{eqn:dupire_tiv_stochastic_ir}) for a predetermined range of strikes.
  Use these equations to obtain the local volatility values. These local
  volatility values will be used during subsequent simulation steps from time
  $t_j$ to time $t_{j + 1}$. This step is first performed with $j=2$ and is
  then repeated for the remaining time slices.
\end{enumerate}
The strike grid can be chosen to be uniform across all calibration time slices.
In this approach, however, the strike grid needs to be sufficiently large to
cover attainable values of the FX rate at long expiries. This in turn would
result in unreliable local volatility values at short expiries and strikes in
the far wings. To overcome this problem, we suggest using a more adequate
strike grid at each calibration time slice, e.g. one that spans a predetermined
number of standard deviations away from the ATMF strike value. Another plausible
approach is choosing the strike grid to cover the range of strikes that the
implied volatility surface is calibrated to, if this information is available.

\subsection{Calibration and Simulation Tests}

In order to test the validity of the calibrated local volatility surface, one
needs to use it for pricing. The prices generated by Monte Carlo simulation by
this method, however, have two sources of Monte Carlo errors. First, the
estimation of the expectations appearing in (\ref{eqn:dupire_C_stochastic_ir}) or
(\ref{eqn:dupire_tiv_stochastic_ir}) is subject to Monte Carlo error. Second,
the evaluation of the Monte Carlo average of the payoffs computed during the
pricing introduces an additional source of error. Keeping the number of paths
high in one of these two steps will allow us to study the convergence of the
other.

For both calibration and pricing, we also generate the antithetic conjugate
paths to reduce the variance. Therefore when we talk about a simulation with $N$
as the number of paths, the actual total number of paths simulated is $2N$.

The G1++ model
parameters we use as input are summarized in Table \ref{table:g1pp_parameters}.
The coefficients of correlation are given by $\rho_{Sd} = 0.166, \rho_{Sf} =
0.551, \rho_{df} = 0.161$.
The domestic $T$-forward measure SDE system
(\ref{eqn:SDEs_T}) is simulated via forward Euler discretization in both calibration and pricing steps. We vary the number of
paths used during calibration while we fix the number of paths for the pricing
simulation at 100,000.
\begin{table}[ht!]
\footnotesize
\begin{center}
\caption{\footnotesize G1++ model parameters for the domestic currency USD and
foreign currency EUR used as input data in our calibration
routine. In our implementation these are
interpolated piecewise constantly with left
endpoints.}\label{table:g1pp_parameters}
\begin{tabular}{lrrrr}
 \hline
 \multicolumn{1}{c}{$t$} & \multicolumn{1}{c}{$\sigma^d_t$} &
 \multicolumn{1}{c}{$\sigma^f_t$} & \multicolumn{1}{c}{$a^d_t$} &
 \multicolumn{1}{c}{$a^f_t$} \\
 \hline
 0 & 0.00956 & 0.00820 & 0.02 & 0.02 \\
 0.24 & 0.00881 & 0.00777 \\
 0.50 & 0.00851 & 0.00754 \\
 0.99 & 0.00804 & 0.00725 \\
 1.99 & 0.00818 & 0.00713 \\
 2.99 & 0.00816 & 0.00747 \\
 4.99 & 0.00810 & 0.00740 \\
 9.99 & 0.00818 & 0.00771 \\
 14.99 & 0.00818 & 0.00771 \\
 \hline
\end{tabular}
\end{center}
\end{table}

In this test, we use the total implied variance formulation
(\ref{eqn:dupire_tiv_stochastic_ir}). The Monte Carlo estimate for the
local volatility is given by
\begin{equation*}
   \hat{\sigma}_{\text{LV}}^2 = \frac{
\frac{\partial C_{\text{BS}}}{\partial T}
- P^d(0, T) \hat{E}}
{\frac{\partial C_{\text{BS}}}{\partial w} \left[
1 - \frac{y}{w} \frac{\partial w}{\partial y} + \frac{1}{2} \frac{\partial^2 w}{\partial y^2}
+\frac{1}{4} \left(\frac{\partial w}{\partial y}\right)^2
\left(-\frac{1}{4}- \frac{1}{w} + \frac{y^2}{w^2}\right)
\right]},
\end{equation*}
where $\hat{E}$ is the Monte Carlo estimate for the expectation appearing in
(\ref{eqn:dupire_tiv_stochastic_ir}). In this formulation, the Monte Carlo error
$\delta \hat{E}$ of this estimate translates to the error in local
volatility as
\begin{equation*}
\delta \hat{\sigma}_{\text{LV}} = \left|
\frac{P^d(0, T) \delta\hat{E}}
{2 \hat{\sigma}_{\text{LV}}
\frac{\partial C_{\text{BS}}}{\partial w} \left[
1 - \frac{y}{w} \frac{\partial w}{\partial y} + \frac{1}{2} \frac{\partial^2 w}{\partial y^2}
+\frac{1}{4} \left(\frac{\partial w}{\partial y}\right)^2
\left(-\frac{1}{4}- \frac{1}{w} + \frac{y^2}{w^2}\right)
\right]}
\right|.
\end{equation*}
For every point of the calibrated surface, we estimate the
Monte Carlo error in the local volatility values. In the end we price with three
volatility surfaces: the original calibrated surface, the original calibrated
surface bumped down by 2 Monte Carlo errors, and the original calibrated
surface bumped up by 2 Monte Carlo errors.

The total implied variance interpolator
has a time slice every 0.05 years, and 100 log-moneynesses per slice. The
log-moneyness points are spanned uniformly over 3.5 standard deviations from the
ATMF strike value for each slice. Similarly, the local volatility
surface is calibrated to have a time slice every 0.05 years. The strike grid of
the local volatility surface has 200 points, spanned uniformly over 3 standard
deviations from the ATMF strike value. The maximum simulation time step
is set at 0.01 years for both calibration and pricing.

We price a set of vanilla call options expiring at 10 years. Since we have the
market implied volatility surface data $\Sigma(K, t)$, we can compare the Monte
Carlo prices to analytical Black-Scholes vanilla call option prices implied by
$\Sigma(K, t)$.

\begin{figure}[ht!]
\centering
\includegraphics[width=0.495\textwidth]{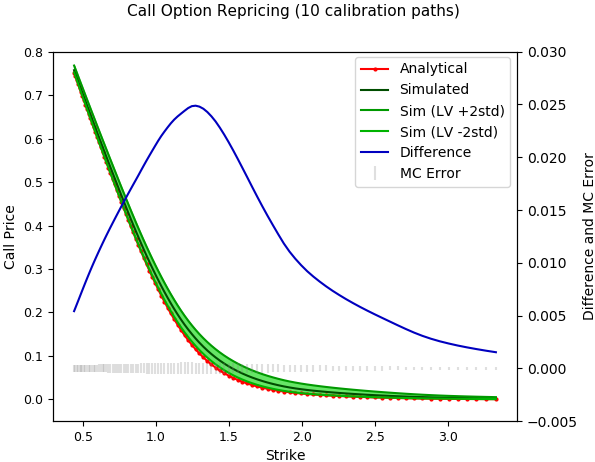}
\includegraphics[width=0.495\textwidth]{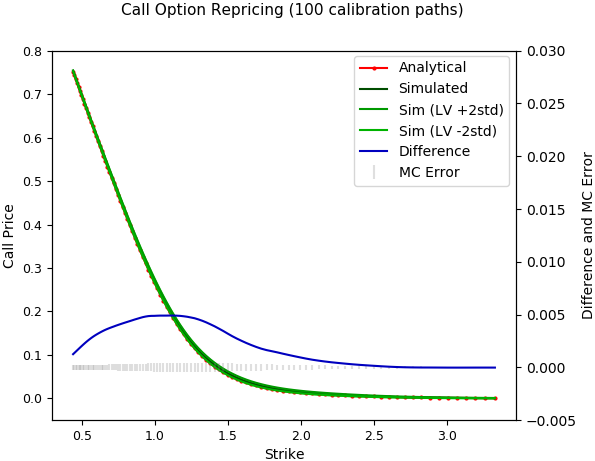}
\includegraphics[width=0.495\textwidth]{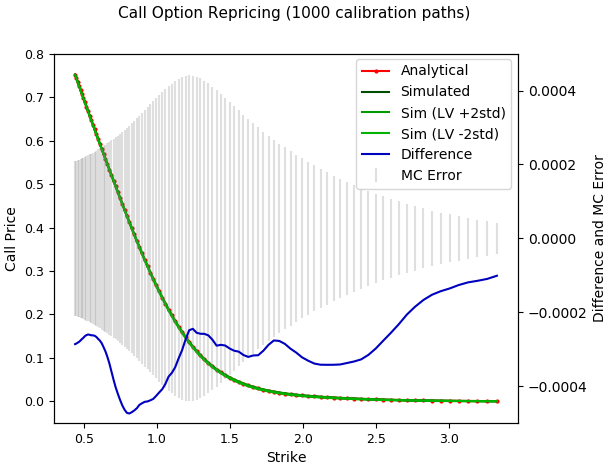}
\includegraphics[width=0.495\textwidth]{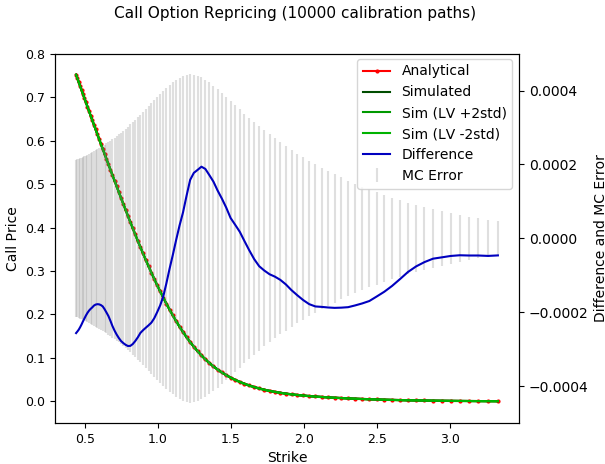}
\caption{LV2SR: Repricing of vanilla call options with local volatility surfaces
calibrated with varying number of paths. The pricing simulation is done with
100,000 paths to keep the simulation Monte Carlo error small. The pricing
is done with the original calibrated surface as well as bumped down and bumped
up surfaces according to Monte Carlo error introduced during the calibration.
Comparison to Black-Scholes prices is made by observing the differences between
the Monte Carlo and the analytical prices. One can observe that the pricing
differences become comparable in magnitude to the simulation Monte Carlo errors
at around 1,000 calibration paths.}
\label{fig:Cal_convergence} 
\end{figure}

In Figure \ref{fig:Cal_convergence} we see that the convergence is
achieved quickly with a relatively low number of paths at 1,000, where the
pricing differences begin to be comparable in magnitude to the simulation Monte
Carlo errors. The relatively low number of paths allow fast calibration.

\FloatBarrier


For the convergence of the calibrated local volatility surface, 
two surfaces calibrated with 100,000 and 50,000 calibration paths respectively, 
were studied by looking at the difference between the
surfaces in Figure \ref{LV2SR_convergence}. For the surfaces the x-axis
represents the spot values of the underlier, the y-axis the time and z-axis the
corresponding local volatility values. It can be seen that difference between
the surfaces is roughly 1\% of the magnitude of the local volatilities and the
convergence of the surface is achieved for relatively small number of paths.

\begin{figure}[ht!]
    \centering
    \includegraphics[width=\textwidth]{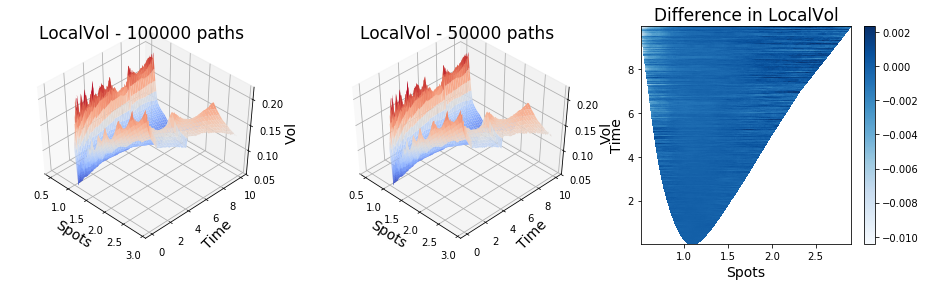}
    \caption{LV2SR: Calibrated local volatility surfaces with 100,000
    calibration paths (left), 50,000 calibration paths (middle) and
    difference between the surfaces (right).
    The difference between the surfaces is roughly 1\% of the magnitude of the
    local volatilities.}
    \label{LV2SR_convergence}
\end{figure}

\FloatBarrier


In order to study the effects of repricing, a local volatility
surface calibrated with 100,000 paths was used for all subsequent tests presented
in this section. The effect of number of repricing simulation paths for the
given local volatility surface is shown in Figure \ref{LV2SR_maturity_repricing}
against the call option price and the corresponding MC error. As the MC error
decreases with the number of simulation paths, between 1,000 and 100,000
paths the maximum absolute difference between the analytical and Monte Carlo priced call
option values decreases from $9.1\times 10^{-4}$ to $5.1\times 10^{-4}$.

\begin{figure}[ht!]
    \centering \includegraphics[width=\textwidth]{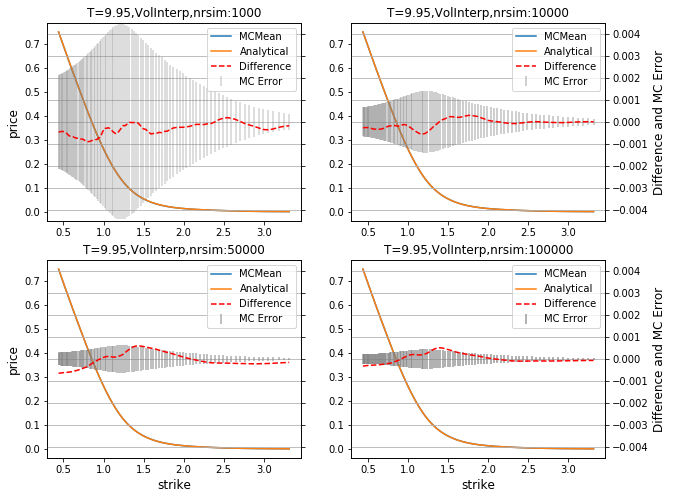}
    \caption{LV2SR: Repricing call options at 100 uniformly spaced strikes at
    maturity T=9.95 years, each with 100,000 (lower-right), 50,000 (lower-left), 10,000
    (upper-right) and 1,000 (upper-left) MC simulation paths,
    using the local volatility surface calibrated with 100,000 MC simulation
    paths. The MC errors and the difference between the analytical and MC
    computed prices decreases with the number of simulation paths.}
    \label{LV2SR_maturity_repricing}
\end{figure}

\FloatBarrier


Furthermore, the calibrated local volatility surface was used to reprice the
call options at multiple maturities and various strikes in the strike-grid to
generate the so-called call price surface. The 100 maturies that are uniformly
spaced between
T=0 and 9.95 years, and 100 strikes per maturity were used to generate the call
price surface shown in Figure \ref{fig:lv2srcallsurface}a. The difference
between the Monte Carlo repriced call option values and analytical Black-Scholes
call option prices assuming constant interest rates and volatilty is shown in
Figure \ref{fig:lv2srcallsurface}b, which is found to be less than 0.1\% of the
call option price.

\begin{figure}[ht!]
\begin{subfigure}{.49\textwidth}
  \centering
  \includegraphics[width=\linewidth]{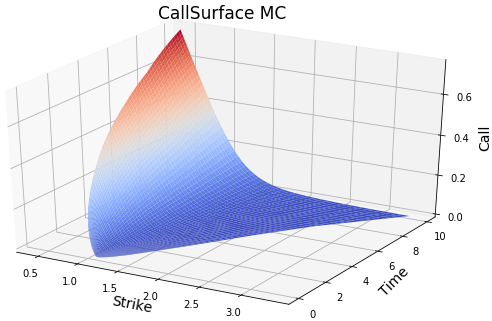}  
  \caption{Repriced call option surface at all strikes and maturities in the grid}
\end{subfigure}
\begin{subfigure}{.49\textwidth}
  \centering
  \includegraphics[width=\linewidth]{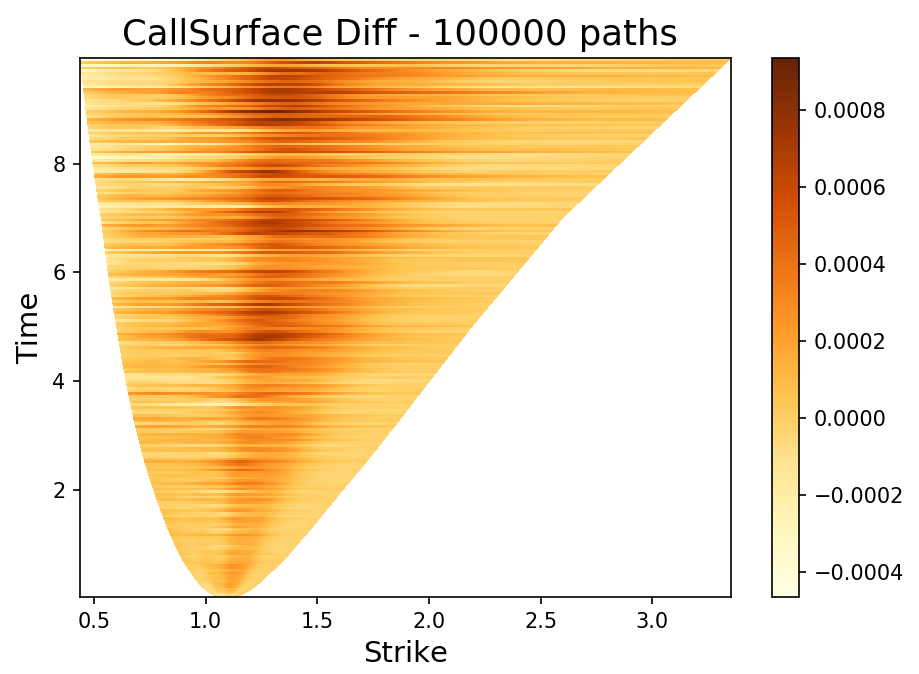}  
  \caption{Difference in analytical call surface and repriced call surface}
\end{subfigure}
\caption{LV2SR: (a) Call options repriced at 100 uniformly spaced maturities
between T=0 to 9.95 years and 100 strikes per maturity and (b) the difference
between Monte Carlo and Black-Scholes analytical price.}
\label{fig:lv2srcallsurface}
\end{figure}

\FloatBarrier


Next, the market implied volatility and the implied volatility
recovered from the Monte Carlo repriced options, by inverting the option price
to evaluate Black-Scholes implied volatility, are compared in Figure
\ref{LV2SR_maturity_repricing2} at maturity T=9.95. Additionally the implied
volatility recovered from MC prices of out-of-the-money call and put options
with $\pm 2$ MC errors is presented.

It can be seen that recovered implied volatility is in good agreement with market
implied volatility and the latter is found to be well within the $\pm 2$ MC error
bounds.

\begin{figure}[ht!]
    \centering \includegraphics[width=\textwidth]{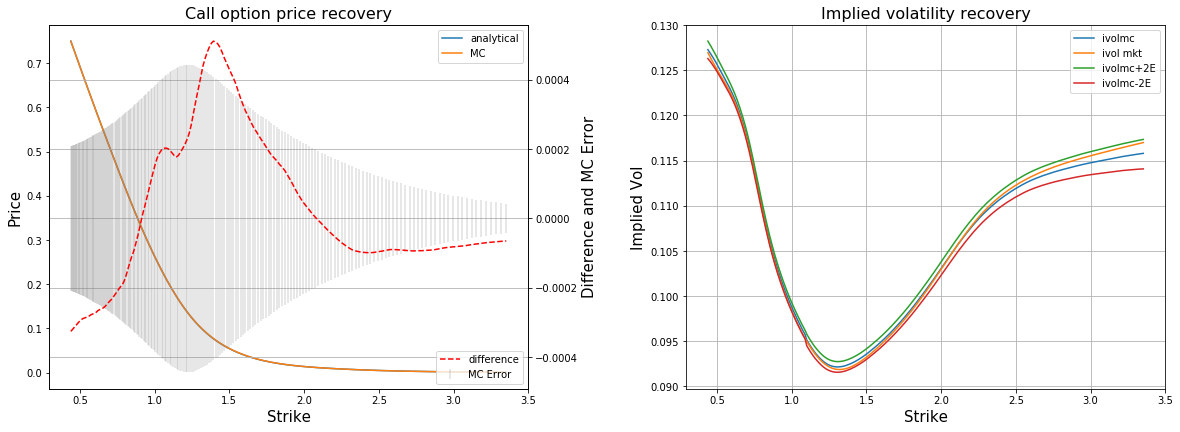}
    \caption{LV2SR: Difference between MC repriced vs analytical call
    options (left), implied volatility computed from repriced out-of-the-money
    call and put options vs market implied volatility at maturity(right) (T=9.95)}
    \label{LV2SR_maturity_repricing2}
\end{figure}

\FloatBarrier


Finally, this procedure is repeated for all the
maturities (time slices) in the repriced call surface, where the market and
recovered implied volatility along with implied volatilities corresponding to
$\pm 2$ MC pricing errors are shown for a few of the slices in Figure
\ref{LV2SR_ivol_recovery}.

\begin{figure}[ht!]
    \centering \includegraphics[width=\textwidth]{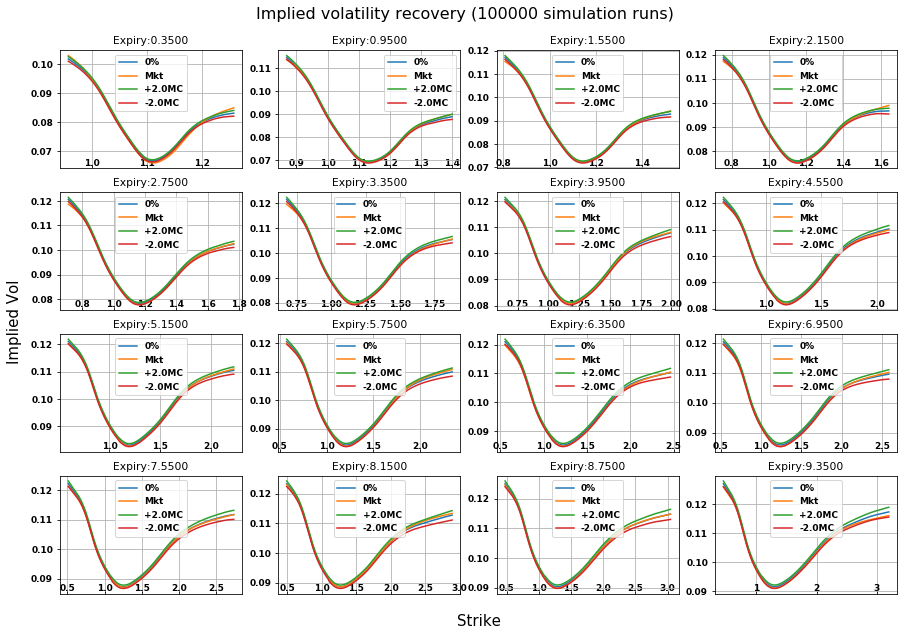}
    \vspace{-20pt}
    \caption{LV2SR: Implied volatility computed from repriced out-of-the-money
    call and put options vs the market implied volatility at various maturities}
    \label{LV2SR_ivol_recovery}
\end{figure}

\section{Stochastic Local Volatility Model (SLV2DR)} \label{sec:SLV2DR}
\subsection{Setup}

The standard local volatility model with deterministic interest rates (LV2DR)
\begin{equation}
   dS_t = \left[r^d_t - r^f_t\right] S_t dt + \sigma_{\text{LV}}(S_t, t) S_t
   dW_t^{S\text{(DRN)}}
   \label{eqn:sde_stdlv_domrn}
\end{equation}
can be extended to incorporate a stochastic nature in the diffusion term by
replacing $\sigma_{\text{LV}}(S_t, t)$ with $L(S_t, t) \sqrt{U_t}$ where
$L(S_t, t) > 0$ is the \emph{leverage function} and $(U_t)$ is the variance
process.
A common choice for $(U_t)$ is the Cox-Ingersoll-Ross (CIR) process
\cite{CIR1985}. With this choice the SLV2DR SDE system is
\begin{align}
\begin{split}
   dS_t =& \left[r^d_t - r^f_t\right] S_t dt + L(S_t, t) \sqrt{U_t} S_t
   dW_t^{S\text{(DRN)}},\\
   dU_t =& \kappa_t (\theta_t - U_t) dt + \xi_t \sqrt{U_t} dW_t^{U\text{(DRN)}},
\end{split}\label{eqn:SDEs_SLV}
\end{align}
with coefficient of correlation $\rho_{SU}$ between the two Brownian drivers.
This model can be seen as an augmentation of the Heston model
\cite{Heston1993}, with $\kappa_t > 0$, $\theta_t > 0$, and $\xi_t > 0$
representing the time-dependent mean reversion, long term variance, and
vol-of-vol parameters. Together with the
initial variance $U_0 > 0$, they form the set of Heston model parameters that
will be calibrated to market data as we will describe below. The leverage function
$L(S_t, t)$ is to be calibrated to recover market option prices.

The standard local volatility $\sigma_{\text{LV}}(S_t, t)$ is
related to the leverage function $L(S_t, t)$ as
\cite{Ogetbil2020}\footnote{We note that the expectation in
\cite{Ogetbil2020} is given in domestic $T$-forward measure
$\mathbb{Q}^{\text{T}}$, which is equal to domestic risk neutral measure
$\mathbb{Q}^{\text{DRN}}$ under deterministic rates.}
\begin{equation}
   \sigma_{\text{LV}}(x, t)^2 = L(x, t)^2
   \mathbf{E}^{\mathbb{Q}^{\text{DRN}}}\left[U_t \mid S_t=x \right].
   \label{eqn:relation_leverage_locvol}
\end{equation}

The main idea is to have the Heston parameters recover market vanilla option
prices near ATMF strikes. The leverage function will then serve as a
correction factor at the wings of the volatility surface.

\subsection{Calibration of the Heston model parameters} \label{sec:heston_calib}

In the limit where the calibration function is set to $L(S_t, t)=1$, the
stochastic local volatility model (\ref{eqn:SDEs_SLV}) reduces to the Heston
model with time dependent coefficients,
\begin{align}
\begin{split}
   dS_t =& \left[r^d_t - r^f_t \right] S_t dt + \sqrt{U_t} S_t
   dW_t^{S\text{(DRN)}},\\
   dU_t =& \kappa_t (\theta_t - U_t) dt + \xi_t \sqrt{U_t} dW_t^{U\text{(DRN)}}.
\end{split}\label{eqn:SDEs_Heston}
\end{align}
Here we assume constant correlation between the two Brownian motions,
\begin{equation*}
\left<dW^{S\text{(DRN)}}, dW^{U\text{(DRN)}}\right>_t = \rho dt.
\end{equation*}
To improve calibration
accuracy, one can trivially extend the model to admit time dependent
correlation. However, we found that our simpler setup is sufficient for our
purposes. The set of parameters we need to calibrate are the mean reversion
$\kappa_t > 0$, the long term variance $\theta_t > 0$, the vol-of-vol $\xi_t >
0$, the coefficient of correlation $\rho$, and the initial variance $U_0 > 0$.
Several methods have been studied in literature to calibrate these parameters,
including an asymptotic approximation \cite{MikhailovNoegel2005}, or
a semi-analytical approach computing the characteristic function and using
control variates to regularize the numerical integration
\cite{GuterdingBoenkost2018}. While these methods can be directly applied to
our setup, we choose a simpler approach of calibrating the parameters using a
PDE solver\footnote{The PDE solver implements the two-dimensional backward
Kolmogorov PDE for the Heston model with time dependent coefficients, using a
standard finite difference alternating direction implicit (ADI) scheme
\cite{Lin2008}.}.
We note that this is not a binding choice, and one can use any reasonably
fast solver to be repetitively called by the objective function of the
optimizer that can accurately price under this model.

\paragraph{Inputs for calibration}

Our calibration routine expects the following quantities as input for
Heston model with piecewise constant coefficients calibration:
\begin{itemize}
\item Spot FX rate $S_0$
\item Market implied volatility $\Sigma(K, t)$ for FX rate
\item Market yield curves $P^d(0, t)$ and $P^f(0, t)$. Since the rates are
deterministic, we have $r^i_t = f^i(0, t), i = d, f$ where the instantaneous
forward rate can be computed from the market yield curves, $f^i(0, t) \equiv -
\frac{\partial \log P^i(0, t) }{\partial t}$.
\end{itemize}

\paragraph{Steps for calibration}

The calibration is done in a bootstrapping fashion. Let $t_i; i=1, \ldots, n$ be
the increasing sequence of (positive) times where we will perform the
calibration. At each calibration step, we make sure the Feller condition
$2\kappa_{t_i}\theta_{t_i} > \xi_{t_i}^2$ is satisfied
\cite{CIR1985, Feller1951} in order to keep the variance $(U_t)$ strictly
positive, by adding a suitable penalty function to the objective function. 
We choose the simplex algorithm
\cite{NelderMead1965} to do the optimization. The parameters $\kappa_t$,
$\theta_t$, and $\xi_t$ are assumed to be piecewise constant.
\begin{enumerate}
  \item For all calibration time slices $t_i$ and predetermined ranges of
  strikes $K_j$, compute a grid of market vanilla call or put option prices
  using the input market implied volatility $\Sigma(K_j, t_i)$.
  \item Using the PDE pricer, solve for all five parameters $\kappa_{t_1}$,
  $\theta_{t_1}$, $\xi_{t_1}$, $\rho$, and $U_0$ to match the market vanilla
  option prices expiring at $t_1$, generated in the first step.
  \item For the subsequent slices $t_i;\ i > 1$, using the results of the
  previous slices in the PDE pricer, solve for the three parameters
  $\kappa_{t_i}$, $\theta_{t_i}$, and $\xi_{t_i}$ to match the market vanilla
  option prices expiring at $t_i$.
\end{enumerate}
Since our primary goal of calibrating the Heston model
parameters is to recover market quotes around ATMF strikes, we
choose the strike grid to cover a narrower range than in the subsequent
calibration routine of the leverage function.

\subsection{Calibration and Simulation Tests for the Heston model}


The strike grid is chosen uniformly over 1 standard
deviation away in both directions from the ATMF strike value for each
slice. The calibrated Heston parameters obtained by the procedure
outlined above are visualized in Figure \ref{SLVDR_heston_timeseries}.

\begin{figure}[ht!]
    \includegraphics[width=\textwidth]{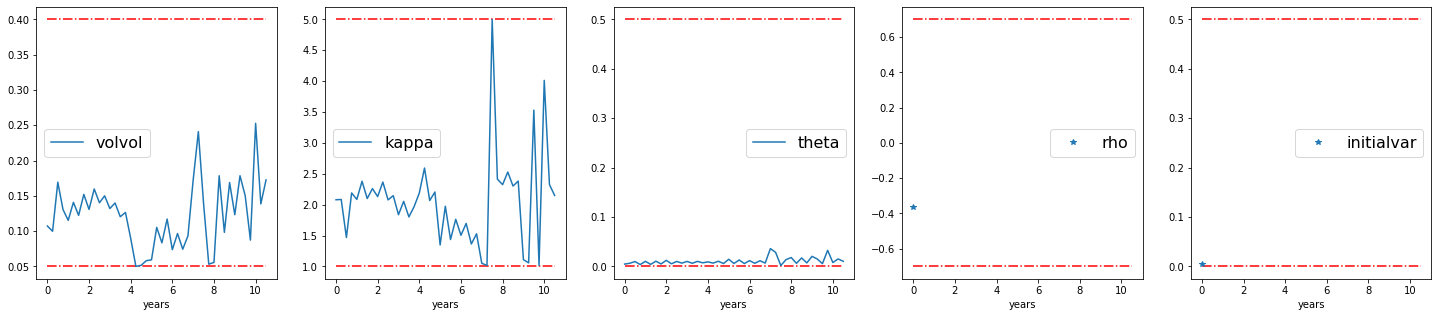}
    \caption{Heston model: Time-Series of calibrated model parameters}
    \label{SLVDR_heston_timeseries}
\end{figure}

\FloatBarrier


The figure shows that the calibrated parameters are mostly within bounds
typically encountered in practice. These parameters will be used in
calibration of the leverage function for the stochastic local volatility
model as outlined in the subsequent section. We test the validity of
the calibrated Heston parameters, which corresponds to the SLV2DR model
(\ref{eqn:SDEs_SLV}) with the leverage function $L(S_t, t)$ set to 1.0,
by pricing vanilla options by simulation. Thus we can examine these call option 
prices close to ATMF. The repriced call option values at 
maturity and the implied volatilities computed from repriced out-of-the-money
call and put options are shown in Figure \ref{SLV2DR_hestonmaturity_repricing}.
It can be seen that the recovered call option prices as well as the implied
volatilities are near market quoted prices and implied
volatilities for strikes within the calibration region.

\begin{figure}[ht!]
    \centerline{\includegraphics[width=\textwidth]{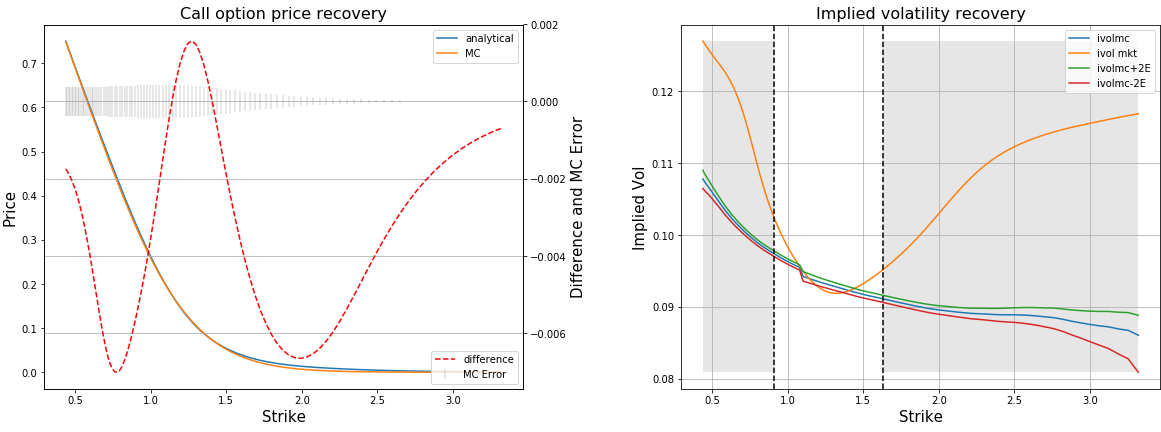}}
    \caption{Heston model: Difference in repriced vs analytical call options
    (left),  implied volatility from repriced out-of-the-money call and put
    options vs market implied
    volatility at maturity(right) (T=9.95). The dashed lines indicate the
    calibration range.}
    \label{SLV2DR_hestonmaturity_repricing}
\end{figure}

\FloatBarrier


We perform an implied volatility recovery test at various maturities by
inverting the simulated Heston model prices to compute the Black-Scholes implied
volatility. We compare this with the market implied volatility in Figure
\ref{SLV2DR_hestonivol_recovery}. The recovered implied volatility is found
to be in good agreement with the market implied volatility for strikes
within the calibration range, as indicated in the plot.

\begin{figure}[ht!]
    \centering
    \centerline{\includegraphics[width=\textwidth]{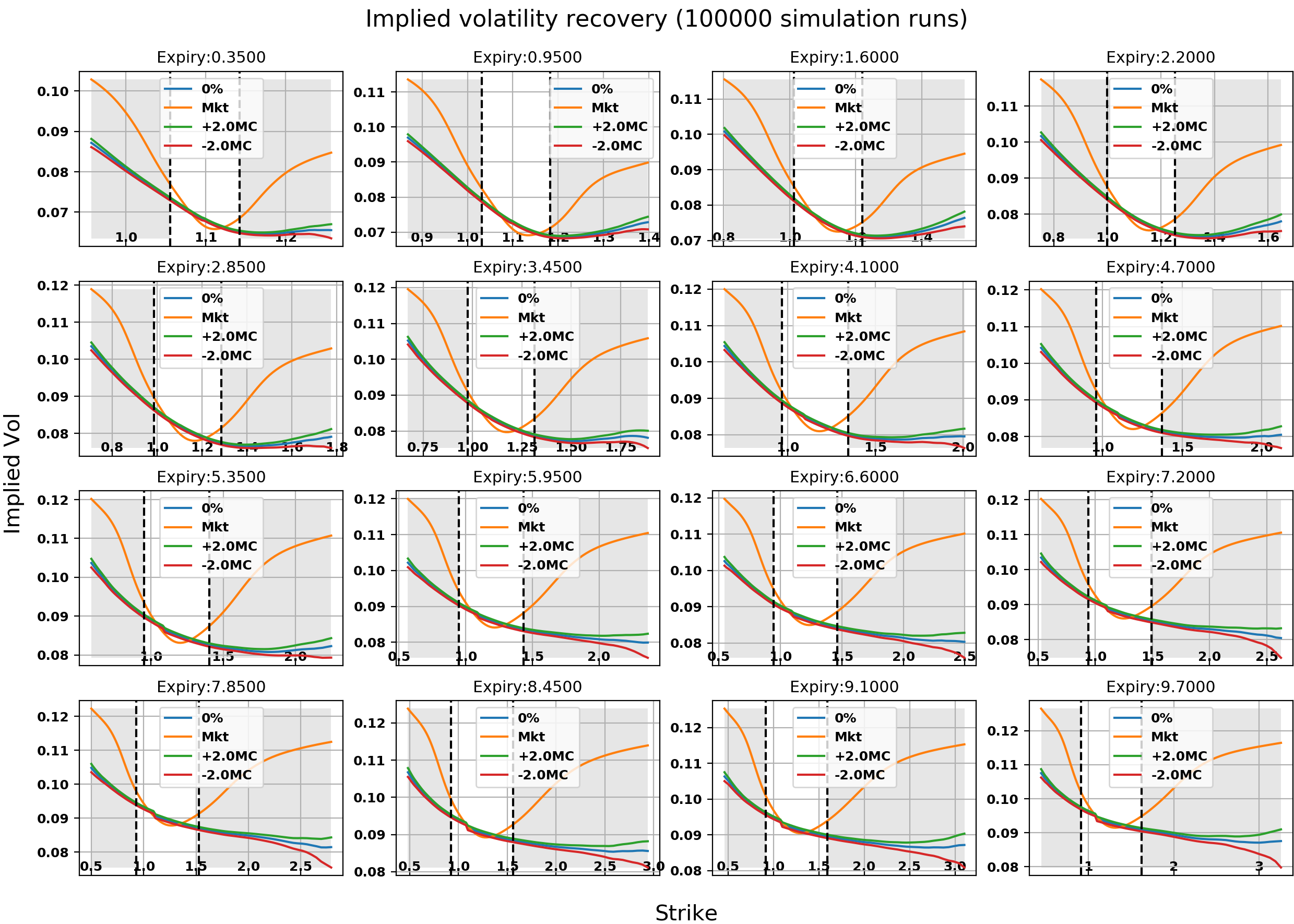}}
    \caption{Heston model: Implied volatility computed from repriced
    out-of-the-money call and put options vs the market implied volatility at
    various maturities. The dashed lines indicate the calibration range at each
    maturity.}
    \label{SLV2DR_hestonivol_recovery}
\end{figure}

\subsection{Calibration of the Leverage Function}

There are various methods proposed in the literature to estimate the conditional
expectation in (\ref{eqn:relation_leverage_locvol}). The standard approach
involves solving a forward Kolmogorov PDE that describes the forward evolution
of the probability density function of the underlier; e.g. the FX rate.
Here we introduce two methods that can be implemented by Monte Carlo simulation.
The Monte Carlo approach we introduce here can be used with problems of higher
dimension; such as a multi-factor stochastic volatility model, or the SLV2SR
model that we will demonstrate in Section \ref{sec:SLV2SR}, where standard PDE
methods cannot be applied due to curse of dimensionality.

\subsubsection{Binning Approach}\label{sec:stoch_locvol_binning}

Suppose we simulate $N$ Monte Carlo paths $(S^i, U^i),\ i = 0, ..., N-1$
up to time $t$. At time $t$, we sort these paths by $S^i$. Let us
denote by $(\hat{S}^i, \hat{U}^i)$ the sorted pairs. We divide these pairs
into $M$ bins, each bin containing $N/M$ pairs. We compute the bin
averages as
\begin{eqnarray*}
   \tilde{S}^j &=& \frac{M}{N} \sum_{k=0}^{\frac{N}{M}-1} \hat{S}^{\frac{N}{M}j + k},\\
   \tilde{U}^j &=& \frac{M}{N} \sum_{k=0}^{\frac{N}{M}-1} \hat{U}^{\frac{N}{M}j + k}.
\end{eqnarray*}
with $j = 0, ..., M-1$. By computing the interpolation function for
$\tilde{S}^j$ against $\tilde{U}^j$, we can estimate the expectation
in (\ref{eqn:relation_leverage_locvol}) for a given $S_t$.

\subsubsection{Regression Approach}

The idea is to linearly regress the variance values $U_t$ against basis
functions $f^n(\cdot)$ of the underlying spot rate values $S_t$.
After simulating $N$ Monte Carlo paths
$(S^i, U^i),\ i = 0, ..., N-1$ up to time $t$, we compute the
regression coefficients $a_n$ by solving the least squares problem
\begin{equation}
   \hat{U}_t = \sum_n a_n f^n(S_t).
   \label{eqn:regression_S_V}
\end{equation}
Standard monomials or orthogonal polynomials with appropriate limits can be used
as basis functions.
For example, if we use a constant term and the first two orders of monomials,
we need to solve
\begin{equation}
   \hat{U}_t = a_1 + a_2 S_t + a_3 S_t^2.
   \label{eqn:regression_S_V_example}
\end{equation}
After computing the regression coefficients $a_n$, we can use this
regression equation to evaluate the expected value of $U_t$ for a given
$S_t$, which gives us the conditional expectation in
(\ref{eqn:relation_leverage_locvol}).

\paragraph{Inputs for calibration}

Our calibration routine expects the following quantities as input for
leverage function calibration:
\begin{itemize}
\item Spot FX rate $S_0$
\item Market implied volatility $\Sigma(K, t)$ for FX rate
\item Market yield curves $P^d(0, t)$ and $P^f(0, t)$. Since the rates are
deterministic, we have $r^i_t = f^i(0, t), i = d, f$ where the instantaneous
forward rate can be computed from the market yield curves, $f^i(0, t) \equiv -
\frac{\partial \log P^i(0, t) }{\partial t}$.
\item Heston model parameters $\kappa_t$, $\theta_t$, $\xi_t$, $\rho$, and $U_0$
calibrated to market vanilla option prices as in Section
\ref{sec:heston_calib}. 
The strike grid is chosen to be near ATMF strikes. In
particular it should be smaller than (e.g. one third of) the local 
volatility strike range.
\end{itemize}

\paragraph{Steps for calibration}

The calibration is done in a bootstrapping fashion. Let $t_i; i=1, \ldots, n$ be
the increasing sequence of (positive) times where we will perform the
calibration. After computing the leverage function values for a time slice
$t_j$, the values are used during the subsequent simulations to estimate
the leverage function values at the next time slice.
\begin{enumerate}
  \item Using the market implied volatility $\Sigma(K, t)$, generate a vanilla
  call option price surface $C(K, t)$ interpolator or a total implied variance
  surface $w(y, t)$ interpolator. The interpolator must be able to evaluate the
  partial derivatives appearing in the local volatility expressions.
  \item Compute the deterministic local volatility $\sigma_{\text{LV}}(K, t)$
  by (\ref{eqn:dupire_C_deterministic_ir}) or
  (\ref{eqn:dupire_tiv_deterministic_ir}) for all calibration time slices $t_i$
  and predetermined ranges of strikes.
  \item Simulate the SDE system (\ref{eqn:SDEs_SLV}) up to time $t_j$. Compute
  the Monte Carlo estimate for the conditional expectation appearing in
  (\ref{eqn:relation_leverage_locvol}) for the same range of strikes from the
  previous step by using one of the approaches described in the previous
  sections. Use this equation to obtain the leverage function values $L(K, t)$.
  These leverage function values will be used during subsequent simulation
  steps from time $t_j$ to time $t_{j + 1}$.
\end{enumerate}
The strike grid can be chosen to be uniform across
all calibration time slices.
In this approach, however, the strike grid needs to be sufficiently large to
cover attainable values of the FX rate at long expiries. This in turn would
result in unreliable local volatility values at short expiries and strikes in
the far wings. To overcome this problem, we suggest using a more adequate
strike grid at each calibration time slice, e.g. one that spans a predetermined
number of standard deviations away from the ATMF strike value. Another plausible
approach is choosing the strike grid to cover the range of strikes that the
implied volatility surface is calibrated to, if this information is available.
For the Heston parameters calibration, we suggest using a similar grid with a
smaller range, e.g. one third of the local volatility grid range.

\subsection{Calibration and Simulation Tests for the Leverage Function}


In order to test the calibration convergence, the differences
between the calibrated leverage functions with 100,000 and 50,000 MC
paths are visualized in Figure \ref{SLVDR_convergence} across the strike
grid at various time slices. As the figure shows, the differences are minor,
indicating the achievement of convergence.

\begin{figure}[ht!]
    \centering \includegraphics[width=\textwidth]{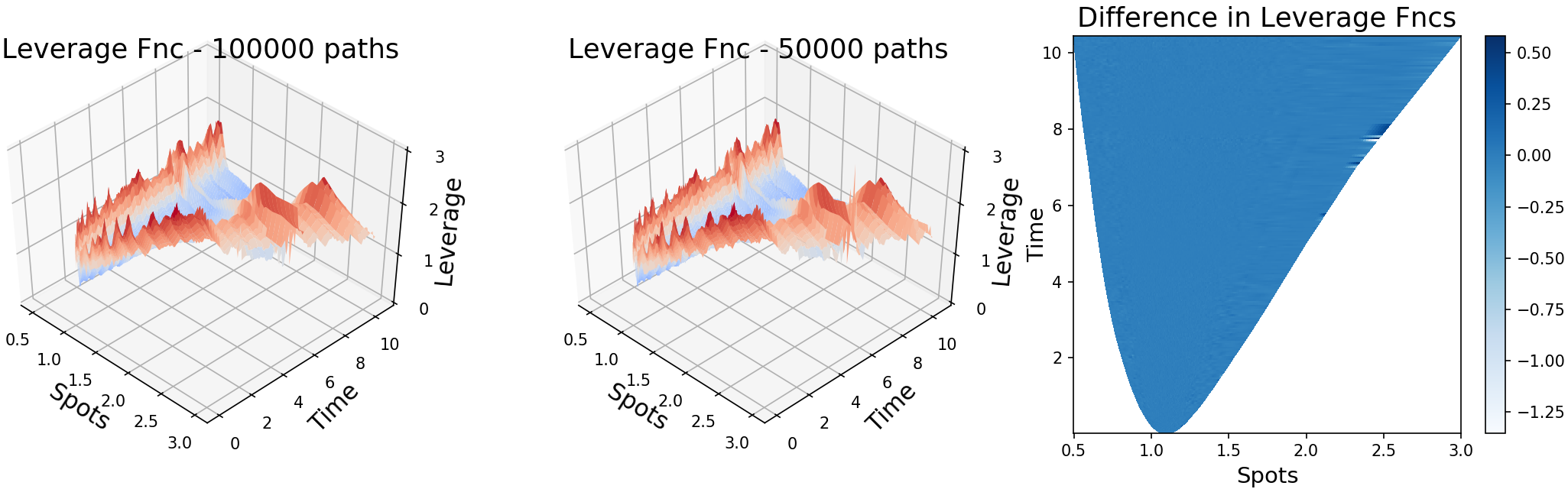}
    \caption{SLV2DR: Difference in leverage function with respect to the
    calibration Monte Carlo paths}
    \label{SLVDR_convergence}
\end{figure}

\FloatBarrier


The leverage function calibrated with 100,000 MC paths was used
for all subsequent tests including repricing and implied volatility recovery.
The call option was repriced for strikes at maturity with the above leverage
function, with various number of simulation MC paths as shown in Figure
\ref{SLV2DR_maturity_repricing}. As the MC error
decreases with the number of simulation paths, between 1,000 and 100,000
paths the maximum absolute difference between the analytical and Monte Carlo priced call
option values decreases from $0.0128$ to $0.0015$.

\begin{figure}[ht!]
    \centering \includegraphics[width=\textwidth]{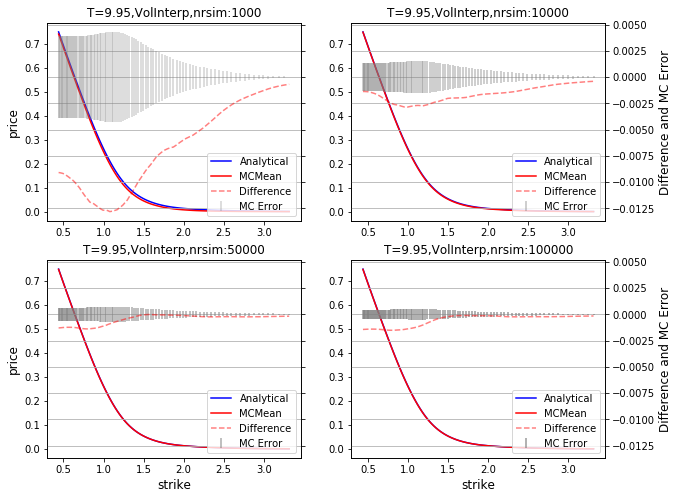}
    \caption{SLV2DR: Repricing the call options with different simulation
    Monte Carlo paths using leverage function calibrated with 100,000
    Monte Carlo paths}
    \label{SLV2DR_maturity_repricing}
\end{figure}

\FloatBarrier


We perform a simulation to reprice vanilla call options across the strike grid
at various maturities to obtain the so called call surface. The
difference between the analytical call price surface implied by $\Sigma(K, t)$
and repriced surface using 100,000 MC paths at each strike and maturity is
visualized in Figure \ref{fig:SLV_2DRcallsurface}. The differences appear to be
small compared to option prices.

\begin{figure}[ht!]
\begin{subfigure}{.48\textwidth}
  \centering
  \includegraphics[width=.9\linewidth]{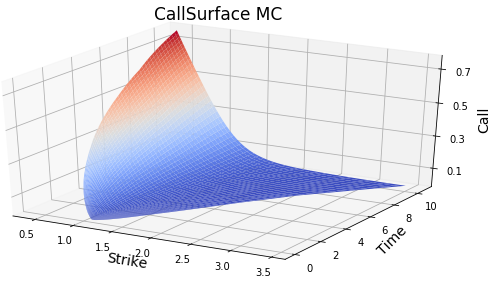}  
  \caption{SLV2DR: Repriced call surface at all strikes and maturities in the grid}
\end{subfigure}
\begin{subfigure}{.48\textwidth}
  \centering
  \includegraphics[width=.95\linewidth]{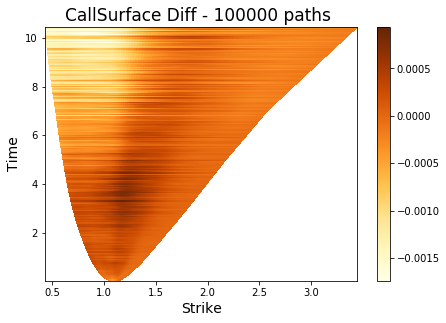}  
  \caption{SLV2DR:Difference between analytical and MC repriced call
  surfaces}
\end{subfigure}
\caption{SLV2DR: Call options repriced at all strikes and maturities and the
difference from BS analytical price}
\label{fig:SLV_2DRcallsurface}
\end{figure}

\FloatBarrier


The implied volatility for the repriced out-of-the-money call and put options at
maturity was obtained by inverting option values with Black-Scholes formula.
The recovered implied volatility is compared to market implied volatility
as shown in Figure \ref{SLV2DR_maturity_ivol}. This is also
performed for option prices obtained by bumping them by $\pm 2$ MC errors.
The market implied volatility and the recovered implied volatility are found to
be in good agreement with each other.

\begin{figure}[ht!]
    \centerline{\includegraphics[width=\textwidth]{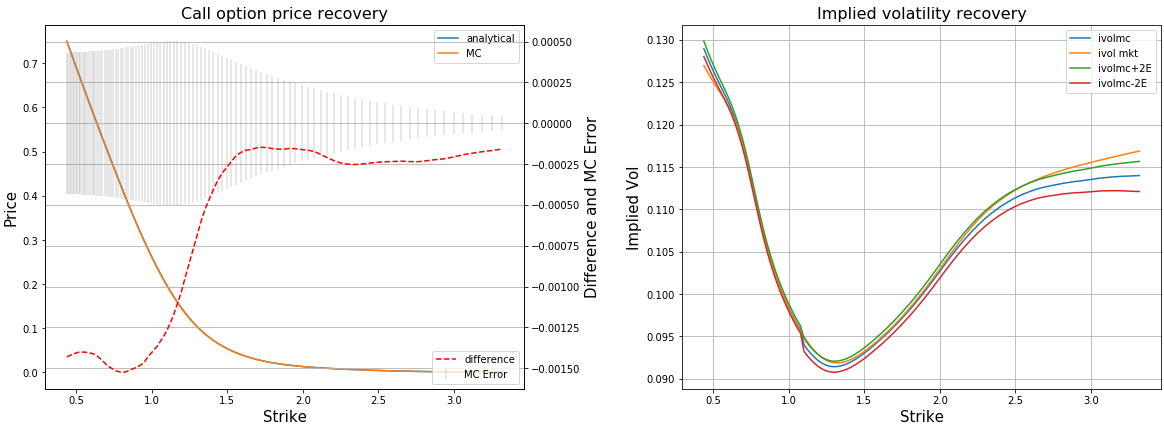}}
    \caption{SLV2DR: Difference in repriced vs analytical call options (left),
    implied volatility from repriced out-of-the-money call and put option vs
    market implied volatility at maturity(right) (T=9.95)}
    \label{SLV2DR_maturity_ivol}
\end{figure}

\FloatBarrier


Finally, the same procedure of recovering implied volatility with
the repriced out-of-the-money call and put option values, is repeated at
multiple maturities and strikes
for different time slices, as shown in Figure \ref{SLV2DR_ivol_recovery}, which
is also found to be in good agreement with the given market implied volatility.
We note that our implied volatility recovery results are consistent with those
demonstrated in \cite{RenMadanQian2007} where the authors calibrate the
stochastic local volatility model by solving a two-dimensional forward
Kolmogorov PDE.

\begin{figure}[ht!]
    \centerline{\includegraphics[width=\textwidth]{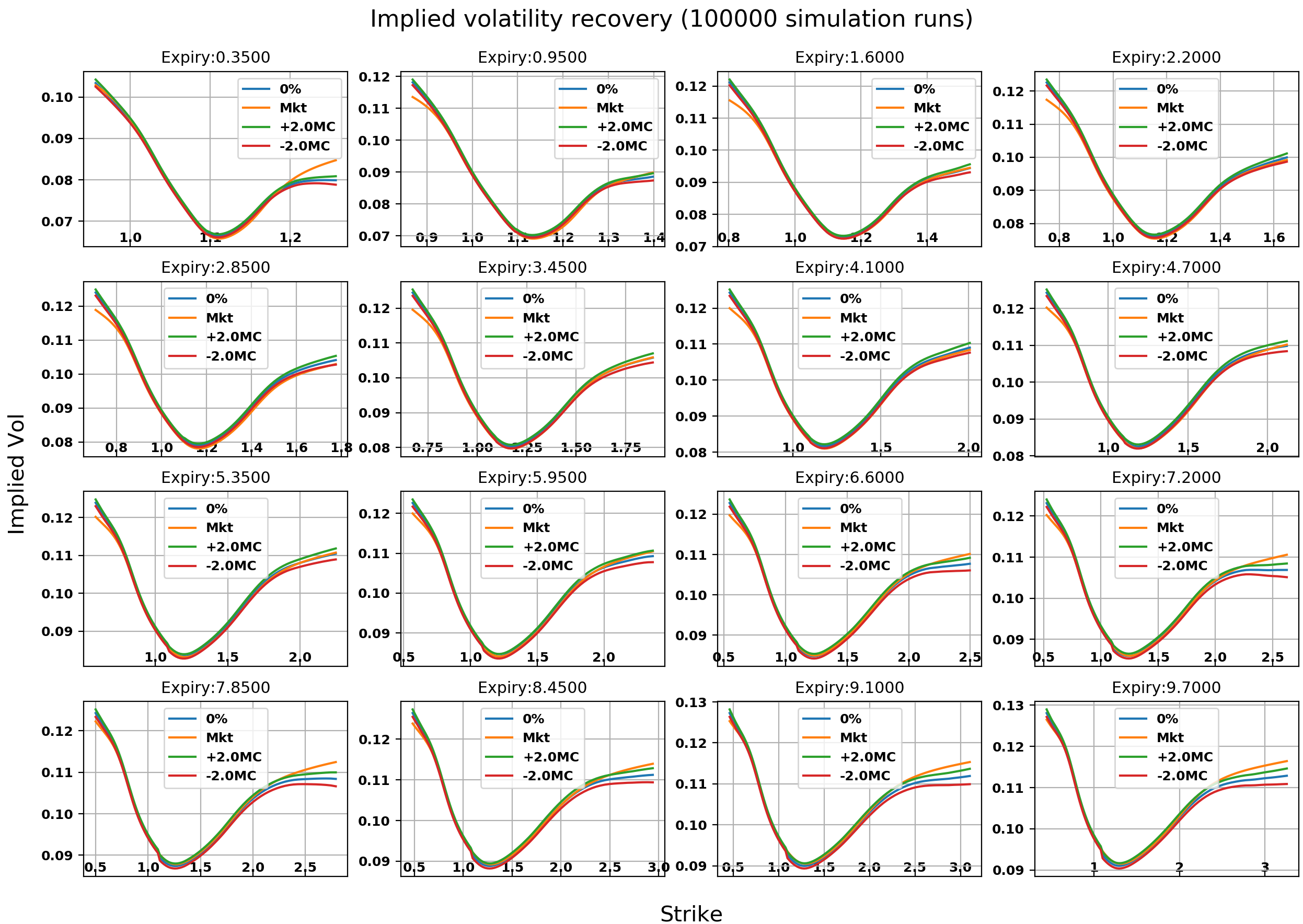}}
    \caption{SLV2DR: Implied volatility computed from repriced out-of-the-money
    call and put options vs the market implied volatility at various maturities}
    \label{SLV2DR_ivol_recovery}
\end{figure}

We have observed in Figures \ref{SLV2DR_hestonmaturity_repricing}
and \ref{SLV2DR_hestonivol_recovery} that the pure Heston model recovered market
quotes near ATMF strikes. Comparison of these to Figures
\ref{SLV2DR_maturity_repricing} and \ref{SLV2DR_ivol_recovery} shows us the
improvement of the full stochastic local volatility model over the Heston model
at all strikes. In particular, we see that the pricing error was reduced by a
factor of 4, and the market implied volatilities are recovered along a greater
range.

\section{Stochastic Local Volatility Model with Stochastic Interest Rates
(SLV2SR)}
\label{sec:SLV2SR}

\subsection{Setup / Model Definition}

This model can be seen an extension to both local volatility model with
stochastic interest rates (LV2SR) and stochastic local volatility model with
deterministic interest rates (SLV2DR). Considering an FX rate process with
stochastic local volatility and stochastic interest rates, the SLV2SR SDE system
can be written in the domestic risk neutral measure $\mathbb{Q}^{\text{DRN}}$ as
\begin{align}
\begin{split}
   dS_t =& \left[r^d_t - r^f_t\right] S_t dt + L(S_t, t)
   \sqrt{U_t} S_t dW_t^{S\text{(DRN)}},\\
   dU_t =& \kappa_t (\theta_t - U_t) dt + \xi_t \sqrt{U_t}
   dW_t^{U\text{(DRN)}},\\
   r^d_t =& x^d_t + \phi^d_t, \\
   dx^d_t =& -a^d_t x^d_t dt + \sigma^d_t dW_t^{d\text{(DRN)}},\\
   r^f_t =& x^f_t + \phi^f_t, \\
   dx^f_t =& \left[-a^f_t x^f_t - \rho_{Sf} \sigma^f_t L(S_t, t)
   \sqrt{U_t}\right] dt + \sigma^f_t dW_t^{f\text{(DRN)}},
   \label{eqn:SDE_stochlv_stochir}
\end{split}
\end{align}
with coefficients of correlation $\rho_{SU}$ (FX rate/FX variance),
$\rho_{Sd}$ (FX rate/domestic interest rate), $\rho_{Sf}$ (FX rate/foreign
interest rate), $\rho_{Ud}$ (FX variance/domestic interest rate), $\rho_{Uf}$
(FX variance/foreign interest rate), and $\rho_{df}$ (domestic interest
rate/foreign interest rate) between the respective Brownian motions.

One can relate the standard local volatility $\sigma_{\text{LV}}(S_t, t)$
to the leverage function $L(S_t, t)$ by \cite{Ogetbil2020}
\begin{equation}
   \sigma_{\text{LV}}(x, t)^2 = L(x, t)^2 \mathbf{E}^{\mathbb{Q}^{\text{T}}}
   \left[U_t \mid S_t=x \right].
   \label{eqn:relation_leverage_locvol_xyz}
\end{equation}
Since our calibration algorithm presented in the next section demands Monte
Carlo estimation of the above expectation in the domestic $T$-forward measure
$\mathbb{Q}^{\text{T}}$, we need to formulate the SDE system in this measure.
The $T$-forward measure SDEs for the exchange rate, domestic short rate and
domestic short rate are derived the same way as in Section \ref{sec:Tforward}.
The remaining computation is for the stochastic variance SDE, which we present
here.

Let $\rho_{Ud}$ is coefficient of correlation between the Brownian motions
$(W^{U\text{(DRN)}}_t)$ and $(W^{d\text{(DRN)}}_t)$, that is
$
d\left<W^{U\text{(DRN)}}, W^{d\text{(DRN)}} \right>_t = \rho_{Ud} dt.
$
The domestic risk neutral measure $\mathbb{Q}^{\text{DRN}}$ to domestic
$T$-forward measure $\mathbb{Q}^{\text{T}}$ transformation is characterized by
(\ref{eqn:dW_dT_dDRN}), Following Lemma \ref{lemma:measure_trf}, the variance
process (\ref{eqn:SDE_stochlv_stochir}) evolves in domestic $T$-forward measure
$\mathbb{Q}^{\text{T}}$ as
\begin{equation}
dU_t = \left[\kappa_t (\theta_t - U_t) - \rho_{Ud} b^d(t, T) \sigma^d_t \xi_t
\sqrt{U_t} \right] dt + \xi_t \sqrt{U_t} dW^{U\text{(T)}}_t.
\end{equation}

Collecting everything,
\begin{align}
\begin{split}
   dS_t =& \left[r^d_t - r^f_t - \rho_{Sd} b^d(t, T) \sigma^d_t L(S_t, t)
   \sqrt{U_t}\right] S_t dt + L(S_t, t) \sqrt{U_t} S_t dW_t^{S\text{(T)}},\\
   dU_t =& \left[\kappa_t (\theta_t - U_t) - \rho_{Ud} b^d(t, T) \sigma^d_t
   \xi_t \sqrt{U_t} \right] dt + \xi_t \sqrt{U_t} dW^{U\text{(T)}}_t,\\
   r^d_t =& x^d_t + \phi^d_t, \\
   dx^d_t =& \left[-a^d_t x^d_t - b^d(t, T) (\sigma^d_t)^2 \right] dt +
   \sigma^d_t dW_t^{d\text{(T)}},\\
   r^f_t =& x^f_t + \phi^f_t, \\
   dx^f_t =& \left[ -a^f_t x^f_t - \rho_{Sf} \sigma^f_t L(S_t, t)
   \sqrt{U_t} - \rho_{df} b^d(t, T) \sigma^d_t \sigma^f_t \right] dt +
   \sigma^f_t dW_t^{f\text{(T)}}
   \label{eqn:SDE_stochlv_stochir_Tfwd}
\end{split}
\end{align}
describe the evolutions of the exchange rate, exchange rate variance, domestic
short rate, and foreign short rate processes under the domestic $T$-forward
measure $\mathbb{Q}^{\text{T}}$.

\subsection{Calibration of the Leverage Function}

The standard forward Kolmogorov PDE approach to solve the conditional
expectation in (\ref{eqn:relation_leverage_locvol_xyz}) suffers from the curse
of dimensionality, as this is now a 4D problem. Similarly, the binning approach
utilized in Section \ref{sec:stoch_locvol_binning} for the model with
deterministic interest rates can not be applied directly, at least without any
simplifying assumptions, as the sorting of the underliers becomes nontrivial.

As in the case of stochastic local volatility model with deterministic interest
rates, the calibration is done in a bootstrapping fashion, after computing the
leverage function values for a time slice $t$, the values are used during
the subsequent simulation to estimate the leverage function values at the next
time slice. The Heston model parameters are assumed to be calibrated to match
an appropriate subset of market data.

The idea is to linearly regress the variance values $U_t$ against basis
functions $f^n(\cdot)$ of the underlying spot rate values $S_t$,
and the two interest rate values $r_t^d$ and $r_t^f$.
After simulating $N$ Monte Carlo paths
$(S^i, r^{d,i}, r^{f,i}, U^i),\ i = 0, ..., N-1$ up to time $t$, we
compute the regression coefficients $a_n$ by solving the least squares
problem
\begin{equation}
   \hat{U}_t = \sum_n a_n f^n(S_t, r_t^d, r_t^f).
   \label{eqn:regression_S_rd_rf_U}
\end{equation}
Standard monomials or orthogonal polynomials can be used as basis functions.
For example, if we use a constant term and the first two orders of monomials
for all underliers, we need to solve
\begin{equation}
   \hat{U}_t = a_1 + a_2 S_t + a_3 S_t^2 + a_4 x_t^d + a_5 {x_t^d}^2 + a_6 x_t^f
   + a_7 {x_t^f}^2.
   \label{eqn:regression_S_rd_rf_U_example}
\end{equation}

Note that in this example we used the $x_t^d$ and $x_t^f$ as the
basis functions instead of the short rates $r_t^d$ and $r_t^f$;
the deterministic parts $\phi_t^d$ and $\phi_t^f$ of the latter
can be absorbed into other coefficients.
After computing the regression coefficients $a_n$, we can use this
regression equation to evaluate the expected value of $U_t$ for given
$S_t$, $r_t^d$, and $r_t^f$, which gives us the conditional
expectation in (\ref{eqn:relation_leverage_locvol_xyz}).

\paragraph{Inputs for calibration}

Our calibration routine expects the following quantities as input for leverage
function calibration:
\begin{itemize}
\item Spot FX rate $S_0$
\item For both domestic and foreign rates, G1++ model parameters mean
reversion, volatility and shift function calibrated to market data\footnote{See
\cite{GurrieriNakabayashiWong2009} for example calibration methods for both
constant and time dependent cases.}
\item Coefficients of correlation between all underlying assets: the FX rate,
its variance, the domestic and foreign short rates
\item Local volatility (with stochastic rates) surface data, as calibrated in
Section \ref{sec:calib_lv_sr}
\item Heston model (with deterministic rates) parameters, as calibrated in
Section \ref{sec:heston_calib}
\end{itemize}

\paragraph{Steps for calibration}

In our framework, we calibrate the leverage function surface time slice by
time slice, in a bootstrapping fashion. Let $t_i; i=0, \ldots, n$ be the
increasing sequence of times where we will perform the calibration. The first
time slice is $t_0 = 0$.
\begin{enumerate}
  \item For $t_0$, (\ref{eqn:relation_leverage_locvol_xyz}) simplifies to
  $\sigma_{\text{LV}}(K, 0)^2 = L(K, 0)^2 U_0$, where $\sigma_{\text{LV}}(K,
  t)^2$ is the local volatility with stochastic rates. Use this equation to
  evaluate the leverage function for the $t_0$ slice for a predetermined range
  of strikes.
  \item For each of the subsequent positive time slices $t_j, j \geq 1$,
  simulate the SDE system (\ref{eqn:SDE_stochlv_stochir_Tfwd}) up to time $t_j$.
  Compute the Monte Carlo estimate for the expectation appearing in
  (\ref{eqn:relation_leverage_locvol_xyz}) for a predetermined range of strikes.
  Use this equation to obtain the leverage function values. These leverage
  function values will be used during subsequent simulation steps from time
  $t_j$ to time $t_{j + 1}$.
\end{enumerate}
The strike grid for the leverage function $L(K, t)$ is chosen to be the same as
the strike grid for the local volatility $\sigma_{\text{LV}}(K, t)$ to avoid any
inaccuracy introduced by interpolation.

\subsection{Calibration and Simulation Tests}

The G1++ model parameters we use as input are summarized in Table
\ref{table:g1pp_parameters}. 
The coefficients of correlation are given by $\rho_{Sd} = 0.166, \rho_{Sf} =
0.551, \rho_{df} = 0.161$.

In order to calibrate the SLV2SR model, the Heston model parameters computed
during the SLV2SR model calibration and the local volatility function
calibrated for the LV2SR model with 100,000 MC paths were used. The
procedure outlined above was followed in order to iteratively compute the
leverage function for each time slice. In order to evaluate the expectation
appearing in (\ref{eqn:relation_leverage_locvol_xyz}), 100,000 MC paths followed
by regression with monomials of order 2 for each of the underlying regressors
$(S_t, x_t^d, x_t^f)$ along with the constant coefficient $a_1$ as in
(\ref{eqn:regression_S_rd_rf_U_example}) were used. The results for the
calibration and repricing tests are presented below.

The effect of number of repricing simulation paths for the given
local volatility surface is shown in Figure \ref{SLV2SR_maturity_repricing}
against the call option price and the corresponding MC error. As the MC error
decreases with the number of simulation paths, between 1,000 and 100,000
paths the maximum absolute difference between the analytical and Monte Carlo priced call
option values decreases from $0.00437$ to $0.00146$.

\begin{figure}[ht!]
    \centering \includegraphics[width=\textwidth]{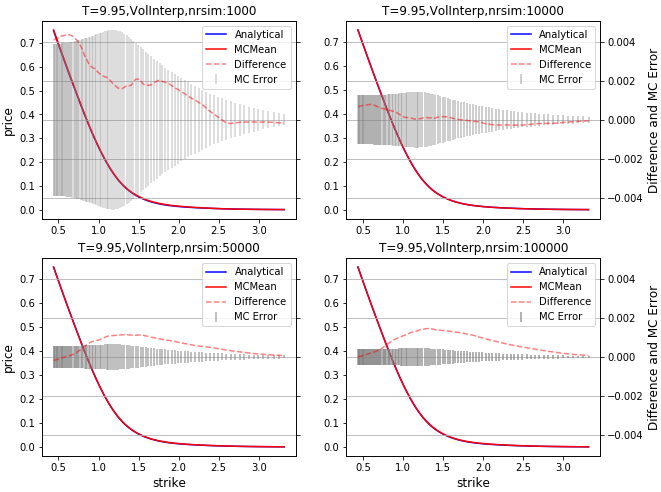}
    \vspace{-25pt}
    \caption{SLV2SR: Repricing call options with various simulation
    Monte Carlo paths using leverage function calibrated for SLV2DR with
    100,000 MC paths and multi-regression approach with 100,000 MC paths.}
    \label{SLV2SR_maturity_repricing}
\end{figure}

\FloatBarrier


Furthermore, the local volatility surface was used to reprice
call options at multiple maturities and various strikes in the strike-grid to
generate the so-called call price surface.
100 maturies uniformly spaced between T=0 and 9.95 years, and 100 strikes per
maturity were used to generate the call price surface shown in Figure
\ref{fig:slv2srcallsurface}a.
The difference between the Monte Carlo repriced call option values and
analytical Black-Scholes call option prices, implied by $\Sigma(K, t)$, assuming constant
interest rate and volatilty is shown in Figure \ref{fig:slv2srcallsurface}b,
which is found to be less than 1\% of the call option price.

\begin{figure}[ht!]
\begin{subfigure}{.49\textwidth}
  \centering
  \includegraphics[width=\linewidth]{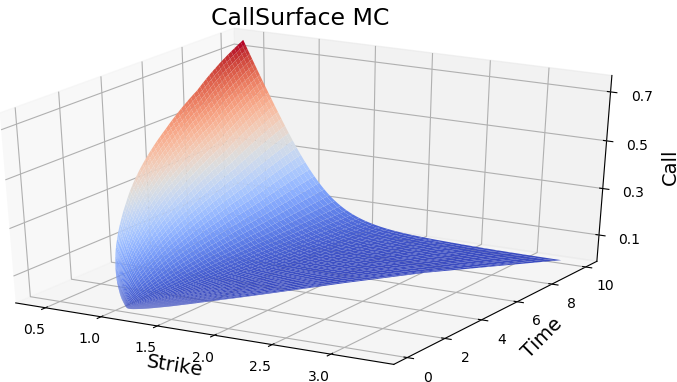}  
  \caption{SLV2SR: Repriced call surface at all strikes and maturities in the grid}
\end{subfigure}
\begin{subfigure}{.49\textwidth}
  \centering
  \includegraphics[width=\linewidth]{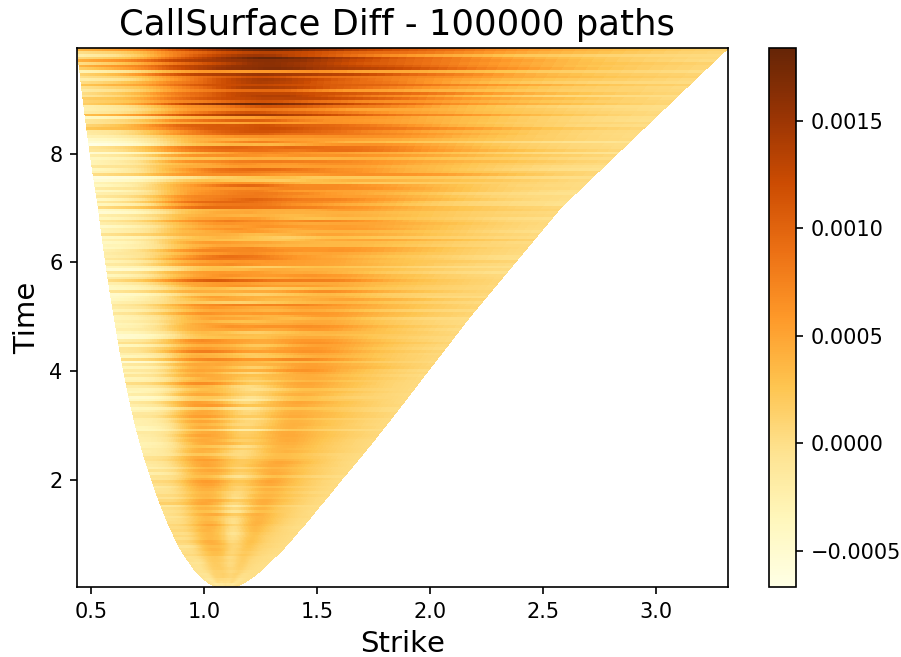}  
  \caption{SLV2SR: Difference between analytical and MC repriced call
  surfaces}
\end{subfigure}
\caption{SLV2SR: Call options repriced at all strikes and maturities and the
difference from BS analytical price}
\label{fig:slv2srcallsurface}
\end{figure}

\FloatBarrier

Next, the market implied volatility and the implied volatility
recovered from the Monte Carlo repriced out-of-the-money call and put options,
by inverting the Black-Scholes formula, are compared in Figure
\ref{SLV2SR_maturity_repricing2} at maturity T=9.95. In addition the implied
volatility recovered from MC prices $\pm 4$ MC
errors are presented. It can be seen that recovered implied volatility
is in good agreement with market implied volatility.

\begin{figure}[ht!]
    \centerline{\includegraphics[width=\textwidth]{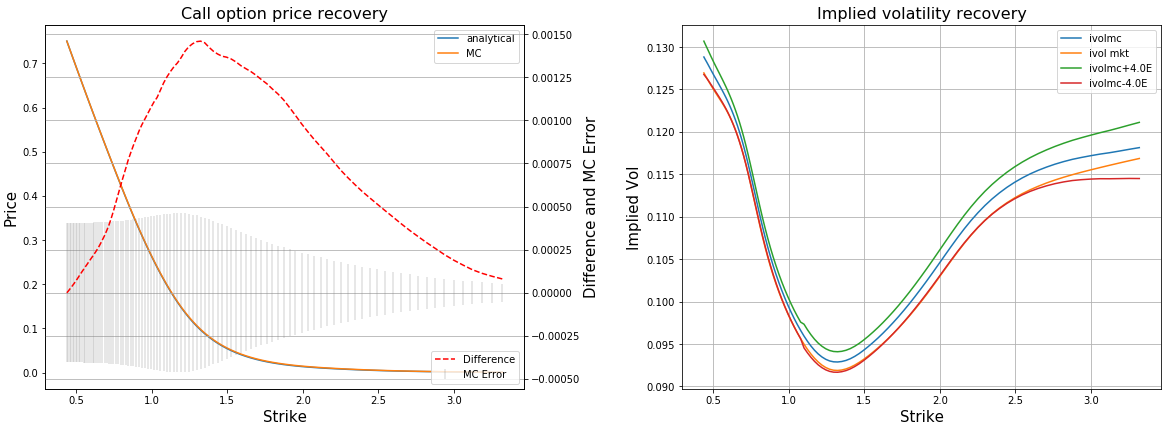}}
    \caption{SLV2SR: Difference between MC repriced and analytical call options (left),
    implied volatility from repriced out-of-the-money call and put options vs
    market implied volatility at maturity(right) (T=9.95)}
    \label{SLV2SR_maturity_repricing2}
\end{figure}


\FloatBarrier


Finally, this procedure is repeated for all the
maturities (time slices) in the repriced call surface, where the market and
recovered implied volatility along with implied volatilities corresponding to
$\pm 4$ MC pricing errors for a few of the slices are shown in
Figure \ref{SLV2SR_ivol_recovery}.

\begin{figure}[ht!]
    \centerline{\includegraphics[width=\textwidth]{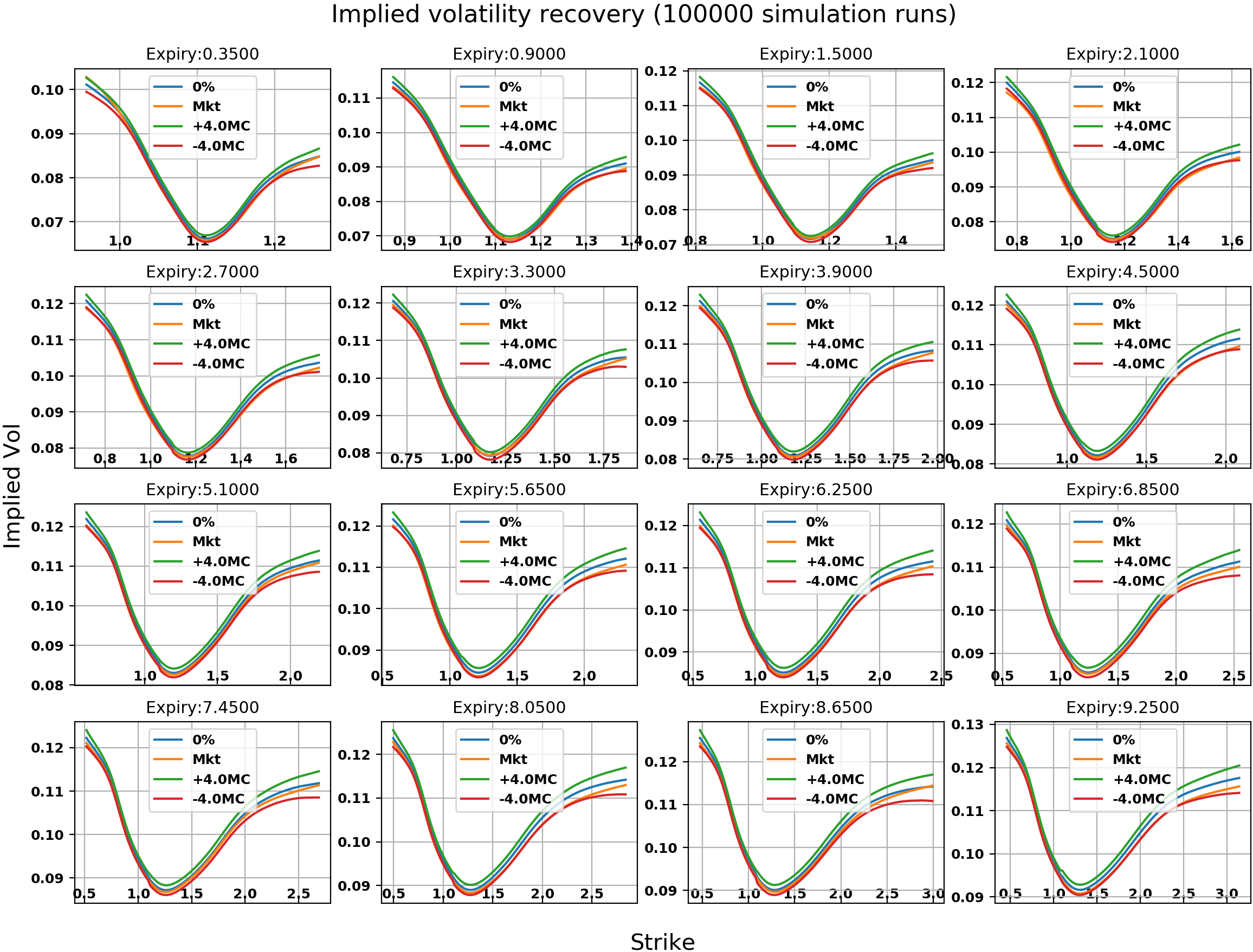}}
    \caption{SLV2SR: Implied volatility computed from repriced out-of-the-money
    call and put option vs the market implied volatility at various maturities}
    \label{SLV2SR_ivol_recovery}
\end{figure}

\FloatBarrier

\section{Further Studies and Conclusions}\label{sec:conclusion}

We studied the convergence and the vanilla option repricing accuracy of the
LV2SR, SLV2DR and SLV2SR models calibrated with the proposed algorithms. While
all three models perform decently with recovering market implied volatilities,
we found that as the models get more complex, e.g. when they have higher number
of parameters, one needs to increase the number of simulation paths to maintain
the accuracy, which in turn results in increased calibration time.

We are now in a position to assess the pricing inaccuracies of the three
main models (LV2SR, SLV2DR and SLV2SR) we considered in this paper. 
To gain a more comprehensive perspective, we also simulate the standard local
volatility model with two interest rates LV2DR (\ref{eqn:sde_stdlv_domrn}) where
the local volatility is computed from market implied volatility by
(\ref{eqn:dupire_tiv_deterministic_ir}). For the LV2DR model, Figure
\ref{LV2DR_maturity_repricing} demonstrates the convergence of the Monte Carlo
price with respect to the number of simulated paths, and Figure
\ref{fig:lv2drcallsurface} shows the Monte Carlo simulated call price surface
and its difference from the analytical call price surface implied by market
data.
\begin{figure}[ht!]
    \centering \includegraphics[width=\textwidth]{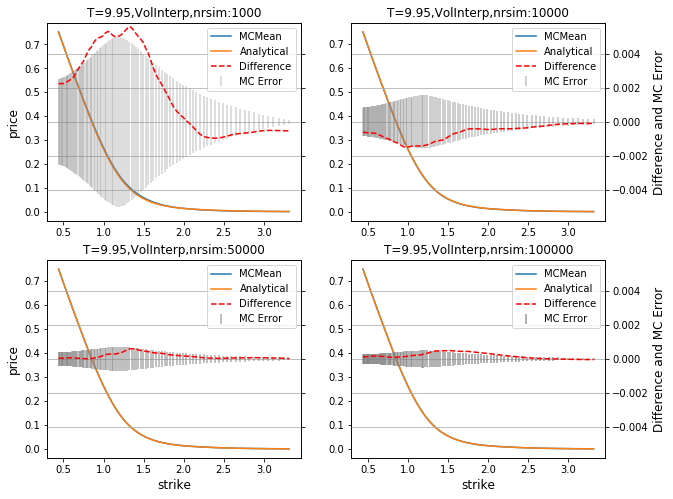}
    \caption{LV2DR: Repricing call options at 100 uniformly spaced strikes at
    maturity T=9.95 years, each with 100,000 (lower-right), 50,000 (lower-left), 10,000
    (upper-right) and 1,000 (upper-left) MC simulation paths,
    using the local volatility surface calibrated with 100,000 MC simulation
    paths. The MC errors and the difference between the analytical and MC
    computed prices decreases with the number of simulation paths.}
    \label{LV2DR_maturity_repricing}
\end{figure}
\begin{figure}[ht!]
\begin{subfigure}{.49\textwidth}
  \centering
  \includegraphics[width=\linewidth]{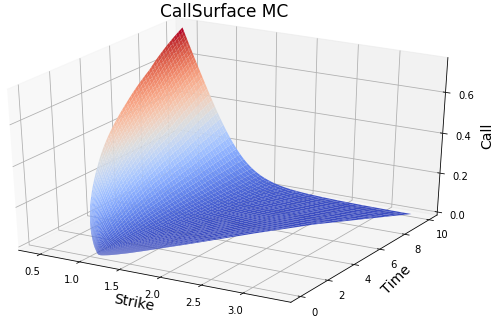}  
  \caption{Repriced call option surface at all strikes and maturities in the grid}
\end{subfigure}
\begin{subfigure}{.49\textwidth}
  \centering
  \includegraphics[width=\linewidth]{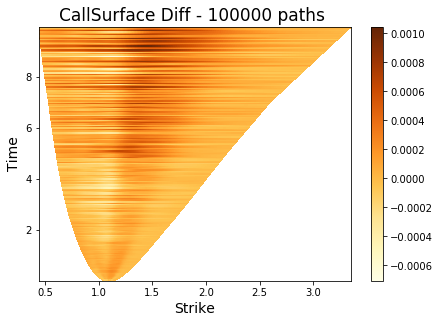}  
  \caption{Difference in analytical call surface and repriced call surface}
\end{subfigure}
\caption{LV2DR: (a) Call options repriced at 100 uniformly spaced maturities
between T=0 to 9.95 years and 100 strikes per maturity and (b) the difference
between Monte Carlo and Black-Scholes analytical price.}
\label{fig:lv2drcallsurface}
\end{figure}

With fixed
number of calibration and simulation paths, we observe that the SLV2DR model
(see Figures \ref{SLV2DR_maturity_repricing}, \ref{fig:SLV_2DRcallsurface},
\ref{SLV2DR_maturity_ivol}) reprices market quotes slightly more
accurately than the SLV2SR model (see Figures
\ref{SLV2SR_maturity_repricing}, \ref{fig:slv2srcallsurface},
\ref{SLV2SR_maturity_repricing2}). In the meantime, we see that our simplest
model, LV2SR (see Figures
\ref{LV2SR_maturity_repricing}, \ref{fig:lv2srcallsurface},
\ref{LV2SR_maturity_repricing2}) gives the smallest repricing errors among the
three models. In the meantime, we observe that the accuracy of LV2SR
calibration is comparable to, and arguably slightly better than, that of LV2DR.

Recovering market quotes is clearly a property desired for any pricing model. In
our case, the market quotes are vanilla option prices or an implied volatility
surface. Yet the risk neutral price of a European vanilla option only depends
on the terminal distribution of the underlier, which can be extracted
from the implied volatility surface numerically. Thus, the three models we are
considering are not strictly necessary to price vanilla options. What about
instruments for which the payoff depends on the joint distribution of
intermediate values of the underliers?

To study the pricing of path dependent options,
we consider an up-and-out barrier call option with 5-year maturity struck at
ATMF strike. Without the barrier feature, the instrument becomes a plain
European vanilla call option, which has an analytical solution under the
Black-Scholes model. The models under consideration price the option with
valuation date 2020-04-30 and 100,000 simulation paths as in Table
\ref{table:vanilla_repricing}.
\begin{table}[ht!]
\footnotesize
\begin{center}
\caption{\footnotesize Vanilla option pricing with various
models. Prices and errors are given in basis
points.}\label{table:vanilla_repricing}
\begin{tabular}{lcc}
 \hline
 \multicolumn{1}{c}{Model} & \multicolumn{1}{c}{Price ($\times 10^{-4}$)} &
 \multicolumn{1}{c}{Error ($\times 10^{-4}$)} \\
 \hline
 Analytical (BS) & 843.79 &  \\
 LV2DR & 845.53 & 2.93\\
 LV2SR & 846.33 & 2.86 \\
 SLV2DR & 843.36 & 2.93 \\
 SLV2SR & 846.51 & 2.89 \\
 \hline
\end{tabular}
\end{center}
\end{table}
We note that the analytical price is within the Monte Carlo error for all the
four models, in consistency with the findings of the previous sections.

Now we turn the up-and-out barrier on and set the barrier position at 1.25 times
the ATMF strike.
The barrier is active throughout the lifetime of the option. The standard
Black-Scholes model admits an analytical solution for such barriers when the
interest rates are constant \cite{Wilmott2013}. The models under consideration
price the option with the same valuation date and number of simulation
paths as in Table \ref{table:barrier_repricing}.
\begin{table}[ht!]
\footnotesize
\begin{center}
\caption{Barrier option pricing with various
models. Prices and errors are given in basis
points.}\label{table:barrier_repricing}
\begin{tabular}{lcc}
 \hline
 \multicolumn{1}{c}{Model} & \multicolumn{1}{c}{Price ($\times 10^{-4}$)} &
 \multicolumn{1}{c}{Error ($\times 10^{-4}$)} \\
 \hline
 Analytical (BS) & 285.55 &  \\
 LV2DR & 287.05 & 1.08 \\
 LV2SR & 284.87 & 1.07 \\
 SLV2DR & 299.00 & 1.14 \\
 SLV2SR & 300.84 & 1.12 \\
 \hline
\end{tabular}
\end{center}
\end{table}
We find that the LV2DR and LV2SR model prices are within 2 Monte Carlo errors of
the analytical price. However the SLV2DR and SLV2SR model prices are
observed not to converge to the analytical BS price; i.e. their differences to
the analytical price are larger in magnitude than the MC errors.
This shows us the impact of stochastic volatility on the barrier option price.

While the stochasticity of the local volatility has a clear impact on the price
of path dependent instruments, the stochasticity of interest rates have little
effect under standard market conditions, e.g. when the interest rate
volatilities are low. We expect this effect to be more prominent in
stressed environments with higher interest rate volatilities, which is what we
study next.

Consider the stochastic rates extension of the Black Scholes model (BS2SR)
\begin{equation}
\begin{split}
dS_t =& \left[r^d_t - r^f_t \right] S_t dt + \sigma^S S_t
dW^{S\text{(DRN)}}_t,\\
dx^d_t =& -a^d_t x^d_t dt + \sigma^d_t dW^{d\text{(DRN)}}_t,\ r^d_t = x^d_t +
\phi^d_t,\\
dx^f_t =& \left[ -a^f_t x^f_t - \rho_{Sf} \sigma^f_t \sigma^S \right] dt +
\sigma^f_t dW^{f\text{(DRN)}}_t,\ r^f_t = x^f_t + \phi^f_t.
\label{eqn:BS2SR_SDEs_DRN}
\end{split}
\end{equation}
This model can be seen as a special case of LV2SR with flat local volatility,
which allows us to incorporate results from Section \ref{sec:LV2SR}. Using
(\ref{eqn:ersds}) and It\^{o}'s lemma one can write the SDEs for the
domestic and foreign zero coupon bonds in domestic risk neutral measure as
\begin{equation}
\begin{split}
\frac{dP^d(t, T)}{P^d(t, T)} =& r^d_t dt - \sigma^d_t b^d(t, T) dW^{d\text{(DRN)}}_t,\\
\frac{dP^f(t, T)}{P^f(t, T)} =& \left[r^f_t + \rho_{Sf} \sigma^S \sigma^f_t
b^f(t, T)\right] dt - \sigma^f_t b^f(t, T) dW^{f\text{(DRN)}}_t.
\end{split}
\end{equation}
Using (\ref{eqn:RN_dT_dDRN}) and Lemma \ref{lemma:measure_trf}, these can be
written in the domestic $T$-forward measure as
\begin{equation}
\begin{split}
\frac{dP^d(t, T)}{P^d(t, T)} = & \left[r^d_t + \left(\sigma^d_t b^d(t, T)
 \right)^2 \right] dt - \sigma^d_t b^d(t, T)
 dW^{d\text{(T)}}_t,\\
 \frac{dP^f(t, T)}{P^f(t, T)} =& \left[r^f_t +  \sigma^f_t b^f(t,
 T)\left(\rho_{Sf} \sigma^S + \rho_{df} \sigma^d_t b^d(t, T)\right)\right] dt\\
 &-\sigma^f_t b^f(t, T) dW^{f\text{(T)}}_t.
 \label{eqn:dPdPf_Tfwd}
\end{split}
\end{equation}
Similarly, the exchange rate process written in the domestic $T$-forward
measure is given by (c.f. (\ref{eqn:LV2SR_ST}))
\begin{equation}
dS_t = \left[ r^d_t - r^f_t - \rho_{Sd} b^d(t, T) \sigma^d_t \sigma^S
\right] S_t dt + \sigma^S S_t dW^{S\text{(T)}}_t.
\label{eqn:dS_Tfwd}
\end{equation}
The forward value of the exchange rate is
\begin{equation}
F(t, T) \equiv \mathbf{E}^{\mathbb{Q}^{\text{(T)}}}[S_T \mid \mathcal{F}_t] =
S_t \frac{P^f(t, T)}{P^d(t, T)},
\end{equation}
which is a martingale under the $T$-forward measure $\mathbb{Q}^{\text{(T)}}$.
Its SDE can be computed from (\ref{eqn:dPdPf_Tfwd}), (\ref{eqn:dS_Tfwd}),
and application of It\^{o}'s lemma,
\begin{equation}
\frac{dF(t, T)}{F(t, T)} = \sigma^S dW^{S\text{(T)}}_t+ \sigma^d_t b^d(t,
T) dW^{d\text{(T)}}_t - \sigma^f_t b^f(t, T) dW^{f\text{(T)}}_t.
\end{equation}
Since the diffusion process above is a linear combination of correlated
Brownian motions, we can extract the total implied variance easily,
\begin{equation}
\begin{split}
\Sigma^2 T =& (\sigma^S)^2 T + 2\sigma^S \int_0^T \left[\rho_{Sd} \sigma^d_t
b^d(t, T) - \rho_{Sf} \sigma^f_t b^f(t, T)\right]dt\\
&+\int_0^T \left[\big(\sigma^d_t b^d(t, T)\big)^2 - 2\rho_{df} \sigma^d_t
b^d(t, T) \sigma^f_t b^f(t, T) + \big(\sigma^f_t b^f(t, T)\big)^2 \right]dt.
\label{eqn:tivF}
\end{split}
\end{equation}
With this quantity one can write down the Black formula for the time zero value
of a vanilla call option as
\begin{equation}
C(K, T) = P^d(0, T)\left[F(0, T) N(\tilde{d}_1) - K
N(\tilde{d}_2) \right],
\end{equation}
with $\tilde{d}_1 = \frac{\log\frac{F(0, T)}{K} + \frac{1}{2} \Sigma^2
T}{\Sigma\sqrt{T}}$ and $\tilde{d}_2 = \tilde{d}_1 - \Sigma\sqrt{T}$.

The integrals in (\ref{eqn:tivF}) can be evaluated numerically. Therefore, for
a given market implied volatility $\Sigma$ at a given maturity $T$ and strike
$K$, one can solve this quadratic equation to find the BS2SR volatility
$\sigma^S$ that will reproduce the market quotes.
Conversely, the right hand side of (\ref{eqn:tivF}) dictates the lower bound
of total implied variance for which there is a solution for the BS2SR
model. The lower bound can be evaluated, by taking the derivative with respect
to $\sigma^S$ and setting it to zero,
\begin{equation}
\begin{split}
\Sigma_{\text{min}}^2 T =& \int_0^T \left[\big(\sigma^d_t b^d(t, T)\big)^2 -
2\rho_{df} \sigma^d_t b^d(t, T) \sigma^f_t b^f(t, T) + \big(\sigma^f_t b^f(t,
T)\big)^2 \right]dt\\
&-\frac{1}{T} \left(\int_0^T \left[\rho_{Sd} \sigma^d_t
b^d(t, T) - \rho_{Sf} \sigma^f_t b^f(t, T)\right]dt\right)^2.
\end{split}
\end{equation}
If the market
total implied variance at a given maturity $T$ and strike $K$ is lower than
this, the BS2SR model will not have a solution. As a consequence, the local
volatility extension of this model (LV2SR) will not be calibratable, that is the
evaluation of (\ref{eqn:dupire_C_stochastic_ir}) during the application of the
calibration algorithm will lead to imaginary values for local volatility given
the already fixed parameters of the interest rate models and the correlations.
This signifies that no real and positive local volatility exists that would
reproduce the market quotes for vanilla options within such a model.

As a study, we take the market data as of 2020-04-30, but vary the interest rate 
model parameters and correlations, and compare the minimum total implied
variance admitted by the BS2SR model for a number of values of interest rate 
volatilities and correlations. In particular for a mean reversion $a^d_t = a^f_t = 0.01$, and volatilities
$\sigma^d_t = \sigma^f_t = 0.02, 0.05$ we look at the minimum total implied
variances admitted by the BS2SR model for various values of $\rho_{df}$.
\begin{figure}[ht!]
    \centering \includegraphics[width=\textwidth]{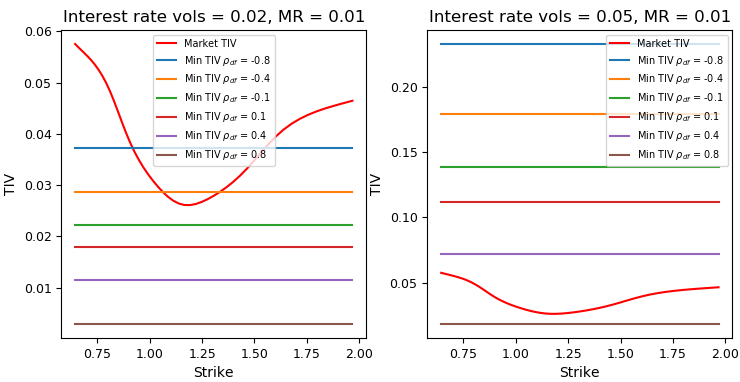}
    \caption{Minimum total implied variances (TIV) allowed by the BS2SR model
    for various values of interest rate volatilities and correlations compared
    to market TIV.
    $\rho_{Sd} = 0.166$, $\rho_{Sf} = 0.551$}
    \label{plot:BS2SR_tiv}
\end{figure}
Figure \ref{plot:BS2SR_tiv} shows that for interest rate volatilities set to
0.02, the market total implied variance is lower than the minimum total implied
variance admitted by the BS2SR model at various strikes when the correlation
$\rho_{df}$ is below -0.4. In case the interest rate volatilities are set to
0.05, the market total implied variance is attainable only when the correlation
$\rho_{df}$ is quite high, around 0.8.

\appendix
\section{Supplementary computations}\label{sec:intrsds}

The following finding is utilized in measure transformations throughout the
paper.
\begin{lemma}\label{lemma:measure_trf}
Let
\begin{equation}
dX_t = a(X_t, t) dt + b(X_t, t) dW^{X\text{(A)}}_t, \label{eqn:dX_A}
\end{equation}
with
\begin{equation*}
P\left(\int_0^t \left(\left| a(X_s, s) \right| + \left| b^2(X_s, s)
\right|\right) ds < \infty\right) = 1,\ \forall t \geq 0,
\end{equation*}
be an SDE describing the evolution of process $(X_t)$
under measure $\mathbb{Q}^{\text{A}}$ whose Brownian motion
$(W^{X\text{(A)}}_t)$ is correlated to another Brownian motion
$(W^{Y\text{(A)}}_t)$ with a coefficient of correlation $\rho_{XY}$, that is
$d\left<W^{X\text{(A)}}, W^{Y\text{(A)}} \right>_t = \rho_{XY} dt$.
Let $(W^{Y\text{(B)}}_t)$
be a Brownian motion under an equivalent measure $\mathbb{Q}^{\text{B}}$
characterized by the transformation
\begin{equation}
\frac{d\mathbb{Q}^{\text{B}}}{d\mathbb{Q}^{\text{A}}} =
\exp\left[ -\frac{1}{2}\int_0^t
c^2(\cdot, s) ds -
\int_0^t c(\cdot, s) dW^{Y\text{(A)}}_s\right],
\end{equation}
such that
\begin{equation}
dW^{Y\text{(B)}}_t = dW^{Y\text{(A)}}_t + c(\cdot, t) dt,
\label{eqn:dW_YA_YB}
\end{equation}
with $P\left(\int_0^t \left| c(\cdot, s) \right| ds < \infty\right) = 1,\
\forall t \geq 0$, and $c(\cdot, t)$ is a function of underlying assets of
the SDE system, and time.
Then the evolution of $(X_t)$ under measure $\mathbb{Q}^{\text{B}}$ is described
by
\begin{equation}
dX_t = \left[a(X_t, t) - \rho_{XY} b(X_t, t) c(\cdot, t) \right] dt + b(X_t, t)
dW^{X\text{(B)}}_t,
\end{equation}
where $(W^{X\text{(B)}}_t)$ is a Brownian motion under $\mathbb{Q}^{\text{B}}$.
\end{lemma}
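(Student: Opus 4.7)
The plan is to reduce this to an application of the one-dimensional Girsanov theorem plus a linear-algebraic decomposition of correlated Brownian motions. Since the Radon--Nikodym derivative is driven only by $W^{Y\text{(A)}}$, the natural idea is to split $W^{X\text{(A)}}$ into its component along $W^{Y\text{(A)}}$ (which transforms in the standard Girsanov way) and an orthogonal component (which is unaffected by the change of measure).

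First I would invoke Girsanov directly on the Brownian motion $W^{Y\text{(A)}}$: the hypothesis on the Radon--Nikodym derivative, together with the Novikov-type integrability condition $P(\int_0^t |c(\cdot,s)|\,ds<\infty)=1$ and the fact that the stated density is indeed an exponential martingale, implies that $W^{Y\text{(B)}}_t = W^{Y\text{(A)}}_t + \int_0^t c(\cdot,s)\,ds$ is a Brownian motion under $\mathbb{Q}^{\text{B}}$, which is exactly (\ref{eqn:dW_YA_YB}).

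Next I would decompose $W^{X\text{(A)}}$ orthogonally by introducing
\begin{equation*}
Z_t \;=\; \frac{1}{\sqrt{1-\rho_{XY}^2}}\Bigl(W^{X\text{(A)}}_t - \rho_{XY}\,W^{Y\text{(A)}}_t\Bigr),
\end{equation*}
and verify via L\'evy's characterization that $Z$ is a Brownian motion under $\mathbb{Q}^{\text{A}}$ independent of $W^{Y\text{(A)}}$ (the quadratic variation is $t$ and the cross-variation with $W^{Y\text{(A)}}$ vanishes by construction). Because $Z$ is orthogonal to the Girsanov driver $W^{Y\text{(A)}}$, it remains a Brownian motion under $\mathbb{Q}^{\text{B}}$, and remains independent of $W^{Y\text{(B)}}$. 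I would then define
\begin{equation*}
W^{X\text{(B)}}_t \;=\; \rho_{XY}\,W^{Y\text{(B)}}_t + \sqrt{1-\rho_{XY}^2}\,Z_t,
\end{equation*}
and check via L\'evy's characterization that it is a Brownian motion under $\mathbb{Q}^{\text{B}}$ with $d\langle W^{X\text{(B)}}, W^{Y\text{(B)}}\rangle_t = \rho_{XY}\,dt$.

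Combining the two decompositions gives $dW^{X\text{(A)}}_t = dW^{X\text{(B)}}_t - \rho_{XY}\,c(\cdot,t)\,dt$, which substituted into (\ref{eqn:dX_A}) yields the claimed SDE for $X_t$ under $\mathbb{Q}^{\text{B}}$. The only mildly delicate step is the orthogonal decomposition when $|\rho_{XY}|=1$, where $Z$ degenerates; in that edge case one can argue directly from $W^{X\text{(A)}} = \pm W^{Y\text{(A)}}$, so the result still holds. Apart from this boundary case, the argument is routine; the main obstacle is simply bookkeeping the two measure changes consistently and confirming that $W^{X\text{(B)}}$ genuinely is a Brownian motion under $\mathbb{Q}^{\text{B}}$, which L\'evy's theorem handles once the correlation structure is written down.
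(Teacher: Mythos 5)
Your proposal is correct and follows essentially the same route as the paper: decompose $W^{X\text{(A)}}$ into its $\rho_{XY}$-correlated component along $W^{Y\text{(A)}}$ plus an independent Brownian motion $Z$, observe that $Z$ is unaffected by the measure change (the paper cites the multi-dimensional Girsanov theorem for this, you argue it via orthogonality and L\'evy's characterization), and substitute $dW^{X\text{(A)}}_t = dW^{X\text{(B)}}_t - \rho_{XY} c(\cdot,t)\,dt$ into the SDE. Your treatment is only marginally more explicit (the L\'evy checks and the degenerate case $|\rho_{XY}|=1$), but the substance is identical.
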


\begin{proof}
We can decompose the Brownian motion $(W^{X\text{(A)}}_t)$ into
$(W^{Y\text{(A)}}_t)$ and an independent Brownian motion $(Z_t)$, that
is $
d\left<W^{Y\text{(A)}}, Z \right>_t = 0,
$
as
\begin{equation}
dW^{X\text{(A)}}_t = \rho_{XY} dW^{Y\text{(A)}}_t + \sqrt{1 - \rho_{XY}^2}
dZ_t. \label{eqn:dW_XA_YA}
\end{equation}
We note that, as a result of the multi-dimensional Girsanov theorem, $(Z_t)$ is
a Brownian motion under $\mathbb{Q}^{\text{B}}$.

Now we can use (\ref{eqn:dW_YA_YB}) and (\ref{eqn:dW_XA_YA}) to write the
process (\ref{eqn:dX_A}) as
\begin{equation}
\begin{split}
dX_t =& a(X_t, t) dt + b(X_t, t) \left[\rho_{XY} dW^{Y\text{(A)}}_t + \sqrt{1 -
\rho_{XY}^2} dZ_t\right]\\
=& a(X_t, t) dt + b(X_t, t) \left[\rho_{XY} \left( dW^{Y\text{(B)}}_t - c(\cdot,
t) dt \right) + \sqrt{1 - \rho_{XY}^2} dZ_t\right]\\
=& \left[ a(X_t, t) - \rho_{XY} b(X_t, t) c(\cdot, t) \right] dt +
b(X_t, t) dW^{X\text{(B)}}_t, \label{eqn:dX_B}
\end{split}
\end{equation}
with
\begin{equation}
\begin{split}
dW^{X\text{(B)}}_t =& \rho_{XY} dW^{Y\text{(B)}}_t + \sqrt{1 - \rho_{XY}^2}
dZ_t\\
=& dW^{X\text{(A)}}_t + \rho_{XY} c(\cdot, t) dt.
\end{split}
\end{equation}
\end{proof}

We use the following result during the change to $T$-forward measure in Section
\ref{sec:Tforward}. The limiting case with constant coefficients was
investigated in \cite{BrigoMercurio2013}. Here we study the general case with
time dependent coefficients.

\begin{lemma}\label{lemma:g1pp_identity}
For the G1++ model (\ref{eqn:drd_G1PP_DRN}) describing the evolution of the
short rate $(r_t)$,
\begin{equation}
\begin{split}
r_t &= x_t + \phi_t, \\
dx_t &= -a_t x_t dt + \sigma_t dW_t,
\end{split}
\end{equation}
where $\phi_t$ is the deterministic shift function that is calibrated to market
yield curve; $a_t$ is the mean reversion coefficient, and $\sigma_t$ is the
volatility coefficient, the following identity holds,
\begin{equation}
\exp{\left[-\int_t^T r_s ds\right]} = P(t, T) \exp\left[-\int_t^T \sigma_v b(v,
T) dW_v - \frac{1}{2} \int_t^T \sigma_v^2 b^2(v, T) dv \right].
\label{eqn:ersds}
\end{equation}
Here, $P(t, T) \equiv \mathbf{E}\left[e^{-\int_t^T r_s ds} \big\vert
\mathcal{F}_t \right]$ is the time $t$ value of a zero coupon bond maturing at
time $T$, and $b(t, T) \equiv \int_t^T e^{- \int_t^v a_z dz} dv$.

\end{lemma}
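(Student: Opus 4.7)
The plan is to express $\int_t^T r_s\,ds$ in closed form as a Gaussian random variable conditioned on $\mathcal{F}_t$, then recognize the identity by matching the deterministic and martingale components via the explicit formula for $P(t,T)$.

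First I would solve the linear SDE for $x_v$ using the integrating factor $\exp\!\left[\int_0^v a_z\,dz\right]$ to obtain
\begin{equation*}
x_v = x_t\, e^{-\int_t^v a_z\,dz} + \int_t^v \sigma_u\, e^{-\int_u^v a_z\,dz}\, dW_u,\qquad v \geq t.
\end{equation*}
Integrating in $v$ from $t$ to $T$ and using the stochastic Fubini theorem to swap the order of integration in the double integral yields
\begin{equation*}
\int_t^T x_v\,dv = x_t\, b(t,T) + \int_t^T \sigma_u\, b(u,T)\, dW_u,
\end{equation*}
since $\int_u^T e^{-\int_u^v a_z\,dz}\,dv = b(u,T)$ by definition of $b$. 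Adding the deterministic contribution $\int_t^T \phi_s\,ds$ gives
\begin{equation*}
\int_t^T r_s\,ds = x_t\, b(t,T) + \int_t^T \phi_s\,ds + \int_t^T \sigma_v\, b(v,T)\, dW_v.
\end{equation*}

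Conditional on $\mathcal{F}_t$, the right-hand side is Gaussian with mean $x_t b(t,T) + \int_t^T \phi_s\,ds$ and variance $\int_t^T \sigma_v^2 b^2(v,T)\,dv$, so by the moment generating function of a normal random variable,
\begin{equation*}
P(t,T) = \mathbf{E}\!\left[e^{-\int_t^T r_s\,ds}\,\big|\,\mathcal{F}_t\right] = \exp\!\left[-x_t b(t,T) - \int_t^T \phi_s\,ds + \tfrac{1}{2}\int_t^T \sigma_v^2 b^2(v,T)\,dv\right].
\end{equation*}
Substituting the expression for $\int_t^T r_s\,ds$ into $\exp[-\int_t^T r_s\,ds]$ and factoring out $P(t,T)$ then yields the claimed identity \eqref{eqn:ersds}.

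The main technical obstacle is the justification of the stochastic Fubini step, which requires verifying that the integrand $(u,v) \mapsto \mathbf{1}_{\{t \leq u \leq v \leq T\}}\,\sigma_u\,e^{-\int_u^v a_z\,dz}$ satisfies the standard integrability hypotheses (e.g.\ those of Protter or Veraar); for deterministic, locally bounded coefficients $a_t$ and $\sigma_t$ this is automatic, which is the setting implicit in the lemma. All remaining manipulations are routine rearrangements once the conditional Gaussianity is in place.
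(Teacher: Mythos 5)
Your proposal is correct, and it shares the paper's overall strategy: both proofs reduce everything to the key identity $\int_t^T x_s\,ds = x_t\,b(t,T) + \int_t^T \sigma_v\,b(v,T)\,dW_v$, observe that $\int_t^T r_s\,ds$ is conditionally Gaussian given $\mathcal{F}_t$, compute $P(t,T)$ from the Gaussian moment generating function, and rearrange. Where you differ is in how that key identity is obtained. The paper first integrates $\int_t^T x_s\,ds$ by parts in $s$, then substitutes the explicit solution for $x_v$, and disposes of the resulting iterated integral by two further (ordinary and It\^{o}) integrations by parts, so it never needs an interchange-of-integrals theorem. You instead integrate the explicit solution directly and collapse the double integral
\begin{equation*}
\int_t^T\!\!\int_t^v \sigma_u\, e^{-\int_u^v a_z\,dz}\, dW_u\, dv
= \int_t^T \sigma_u\!\left(\int_u^T e^{-\int_u^v a_z\,dz}\, dv\right) dW_u
= \int_t^T \sigma_u\, b(u,T)\, dW_u
\end{equation*}
via the stochastic Fubini theorem. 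Your route is shorter and more transparent, at the price of invoking Fubini, whose hypotheses you correctly note are automatic here because $a$ and $\sigma$ are deterministic and locally bounded; the paper's route is more elementary in the tools it uses but requires the longer chain of integrations by parts. Both are complete arguments, and your explicit closed form for $P(t,T)$ matches the one derived (implicitly) in the paper's final step.
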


\begin{proof}
The integral on the left hand side of (\ref{eqn:ersds}) can be split into
$-\int_t^T r_s ds = -\int_t^T x_s ds -\int_t^T \phi_s ds$. We start with
integrating the first of these integrals by parts,
\begin{equation}
\begin{split}
\int_t^T x_s ds =& s x_s \big\vert_t^T - \int_t^T s dx_s\\
=& (T-t) x_t + \int_t^T (T-v) (-a_v x_v dv + \sigma_v dW_v).
\label{eqn:int_xudu_byparts}
\end{split}
\end{equation}
We compute $x_s$ by evaluating the following integral
\begin{equation*}
\int_t^s d_u\left(x_u e^{\int_t^u a_z dz}\right) =
\int_t^s a_u e^{\int_t^u a_z dz} x_u du +
\int_t^s e^{\int_t^u a_z dz} (-a_u x_u du + \sigma_u dW_u),
\end{equation*}
which leads to
\begin{equation}
x_s = x_t e^{-\int_t^s a_z dz} + \int_t^s e^{-\int_u^s a_z dz} \sigma_u dW_u.
\end{equation}
We plug this into (\ref{eqn:int_xudu_byparts}) to get
\begin{equation*}
\begin{split}
\int_t^T x_s ds =& (T-t) x_t - x_t \int_t^T (T-v) a_v e^{-\int_t^v a_z dz} dv\\
&+ \int_t^T (T-v) \left[
-a_v \left( \int_t^v e^{-\int_u^v a_z dz}
\sigma_u dW_u \right) dv + \sigma_v dW_v \right].
\end{split}
\end{equation*}
The first two integrals appearing in the right hand side above can be evaluated
by integration by parts. The first one yields
\begin{equation*}
\begin{split}
-\int_t^T (T-v) a_v e^{-\int_t^v a_z dz} dv =& (T-v) e^{-\int_t^v a_z
dz}\big\vert_{v=t}^T + \int_t^T e^{-\int_t^v a_z dz} dv \\
=& -(T-t) + b(t, T).
\end{split}
\end{equation*}
The second one evaluates
\begin{equation*}
\begin{split}
-&\int_t^T (T-v) a_v \int_t^v e^{-\int_u^v a_z dz} \sigma_u dW_u dv \\
=&-\int_t^T \left(\int_t^v e^{\int_t^u a_z dz} \sigma_u dW_u \right)
d_v\left(\int_t^v a_y (T-y) e^{-\int_t^y a_z dz} dy\right)\\
=& -\left[\int_t^v e^{\int_t^u a_z dz} \sigma_u dW_u \int_t^v a_y (T-y)
e^{-\int_t^y a_z dz} dy \right]_{v=t}^T\\
&+\int_t^T e^{\int_t^v a_z dz} \sigma_v dW_v \int_t^v a_y (T-y) e^{-\int_t^y a_z
dz} dy\\
=& - \int_t^T e^{\int_t^v a_z dz} \sigma_v dW_v \int_v^T a_y (T-y)
e^{-\int_t^y a_z dz} dy\\
=& \int_t^T \left[-(T-v) + b(v, T) \right] \sigma_v dW_v.
\end{split}
\end{equation*}
This leads to
\begin{equation}
\int_t^T x_s ds = x_t b(t, T) + \int_t^T \sigma_v b(v, T) dW_v.
\label{eqn:int_xu_du}
\end{equation}
Hence
\begin{equation*}
\begin{split}
\mathbf{E}\left[\int_t^T x_u du\right] =& x_t b(t, T)\\
\mathbf{Var}\left[\int_t^T x_u du\right] =& \int_t^T \sigma_v^2 b^2(t, T) dv.
\end{split}
\end{equation*}
Since $\mathbf{E}\left[e^{-X}\right] = e^{-\mu + \frac{1}{2}\sigma^2}$ where
$X \sim N(\mu, \sigma^2)$, we have
\begin{equation*}
\begin{split}
P(t, T) =& \mathbf{E}\left[e^{-\int_t^T (\phi_s + x_s) ds} \big\vert
\mathcal{F}_t \right] = e^{-\int_t^T \phi_s ds - x_t b(t, T) + \frac{1}{2}
\int_t^T \sigma_v^2 b^2(v, T) dv}\\
=& e^{\int_t^T x_s ds} e^{-\int_t^T r_s ds} e^{- x_t b(t, T) + \frac{1}{2}
\int_t^T \sigma_v^2 b^2(v, T) dv}.
\end{split}
\end{equation*}
Plugging in (\ref{eqn:int_xu_du}) into this expression gives the desired result.
\end{proof}

\begin{lemma}\label{lemma:dupire_derivation}
Let
\begin{equation}
dS_t = (r^d_t - r^f_t) S_t dt + \sigma_S(t, S_t, U_t) S_t dW^{S\text{(DRN)}}_t 
\label{eqn:general_sde}
\end{equation}
be an SDE describing the evolution of asset process $(S_t)$ under the domestic
risk neutral measure $\mathbb{Q}^{\text{DRN}}$ associated with the money market
numer\'aire $B_t^d = \exp[\int_0^t r_s^d ds]$; $(r^d_t)$, $(r^f_t)$ and $(U_t)$
be stochastic processes.
Let $C(K, T)$ be the price of a vanilla call option written on $(S_t)$, with
strike $K$ and maturity $T$, so that $C(K, T) =
\mathbf{E}^{\mathbb{Q}^{\text{DRN}}}[\frac{1}{B_T^d}(S_T - K)^+]$.
Then the following identity holds
\begin{equation*}
\mathbf{E}^{\mathbb{Q}^{\text{T}}}\left[ \sigma^2_S(T,S_T,U_T) | S_T=K  \right]
= \frac{\frac{\partial C}{\partial T} - P^d(0, T)
\mathbf{E}^{\mathbb{Q}^{\text{T}}}\left[ (r^d_T K - r^f_T S_T )
\mathds{1}_{S_T>K} \right]}{ \frac{1}{2} K^2 \frac{\partial^2 C}{\partial K^2}},
\label{eqn:dupire_intermediate}
\end{equation*}
where the expectations are taken under the $T$-forward measure
$\mathbb{Q}^{\text{T}}$ associated with the zero coupon bond price numer\'aire
$P^d(0, T) = \mathbf{E}^{\mathbb{Q}^{\text{DRN}}}[\frac{1}{B_T^d}]$.
\end{lemma}

\begin{proof}
Here we apply the methodology from Section 10.2.1 of \cite{OG2019} to our setup.
The Tanaka-Meyer formula for the positive part function for a continuous
semimartingale states in differential form
\begin{equation*}
d(S_t-K)^+ = \mathds{1}_{S_t>K} dS_t + \frac{1}{2} \delta(S_t-K) d\langle S\rangle_t ,
\end{equation*}
with $\langle\cdot\rangle_s$ being the quadratic variation, and $\delta$ being
the Dirac delta distribution, possibly characterized as a limit.
We will later take expectations, so we are only interested in the drift term.
Applying (\ref{eqn:general_sde}) into this, we obtain that the drift term of $d(S_t-K)^+$
as
\begin{equation*}
\mathds{1}_{S_t>K} (r^d_t - r^f_t) S_t + \frac{1}{2} \delta(S_t-K)
\sigma^2_S(t,S_t,U_t) S^2_t .
\end{equation*}
We use the product rule to compute the drift term
of the discounted payoff $d\left[\frac{1}{B^d_t}(S_t-K)^+\right]$ as
\begin{equation*}
\frac{1}{B^d_t} \left( 
 -r^d_t (S_t-K)^+ + \mathds{1}_{S_t>K} (r^d_t - r^f_t) S_t + \frac{1}{2} \delta(S_t-K) \sigma^2_S(t,S_t,U_t) S^2_t 
\right),
\end{equation*}
which we simplify, using $(S_t-K)^+ = \mathds{1}_{S_t>K} (S_t-K)$, to
\begin{equation*}
\frac{1}{B^d_t} \left( 
 \mathds{1}_{S_t>K} (r^d_t K - r^f_t S_t ) + \frac{1}{2} \delta(S_t-K) \sigma^2_S(t,S_t,U_t) S^2_t 
\right).
\end{equation*}
We take the expectation under the domestic risk neutral measure
$\mathbf{E}^{\mathbb{Q}^{\text{DRN}}}$, noting the fact that the expression
inside the differential is the value of the vanilla call option with strike $K$
and maturity $T$, and assuming the diffusion term is a true martingale, to
arrive at
\begin{equation*}
\begin{split}
\frac{\partial C}{\partial T} = & \mathbf{E}^{\mathbb{Q}^{\text{DRN}}}\left[
\frac{1}{B^d_T} \left( \mathds{1}_{S_T>K} (r^d_T K - r^f_T S_T ) + \frac{1}{2}
\delta(S_T-K) \sigma^2_S(T,S_T,U_T) S^2_T \right) \right]\\
 = & P^d(0, T)
\mathbf{E}^{\mathbb{Q}^{\text{T}}}\left[
\mathds{1}_{S_T>K} (r^d_T K - r^f_T S_T ) + \frac{1}{2} \delta(S_T-K)
\sigma^2_S(T,S_T,U_T) S^2_T \right]
\end{split}
\end{equation*}
We first consider 
\begin{equation*}
\begin{split}
\mathbf{E}^{\mathbb{Q}^{\text{T}}} \left[ \delta(S_T-K) \sigma^2_S(T,S_T,U_T)
S^2_T \right]  = &
\mathbf{E}^{\mathbb{Q}^{\text{T}}} \left[ \mathbf{E}^{\mathbb{Q}^{\text{T}}}
\left[ \sigma^2_S(T,S_T,U_T) | S_T\right]  \delta(S_T-K) S^2_T  \right] \\
 = & q^T(K, T) K^2 \mathbf{E}^{\mathbb{Q}^{\text{T}}}\left[
 \sigma^2_S(T,S_T,U_T) | S_T=K \right]
\end{split}
\end{equation*}
where we used the properties of conditional expectations, and denoted the
marginal distribution of $(S_T)$ under the $T$-forward measure with $q^T(S_T,
T)$. Now we use \cite{BL1978} relationship $\frac{\partial^2
C}{\partial K^2} = P^d(0,T) q^T(K, T)$  to rewrite
\begin{equation*}
 \begin{split}
\frac{\partial C}{\partial T} = &P^d(0,T)
\mathbf{E}^{\mathbb{Q}^{\text{T}}}\left[ \mathds{1}_{S_T>K} (r^d_T K - r^f_T S_T
) \right] \\
&+ \frac{1}{2} K^2 \frac{\partial^2 C}{\partial K^2}
\mathbf{E}^{\mathbb{Q}^{\text{T}}}\left[ \sigma^2_S(T,S_T,U_T) | S_T=K  \right]
\end{split}   
\end{equation*}
\end{proof}

For the LV2SR model, $\sigma_S(t,S_t,U_t) = \sigma_{LV} (S_t,t)$, thus the
second expectation simplifies to $\sigma^2_{LV} (K,T)$ and we obtain
\begin{equation}
\frac{\partial C}{\partial T} = P^d(0,T)
\mathbf{E}^{\mathbb{Q}^{\text{T}}}\left[ \mathds{1}_{S_T>K} (r^d_T K - r^f_T S_T ) \right]  + 
\frac{1}{2} K^2 \frac{\partial^2 C}{\partial K^2}  \sigma^2_{LV} (K,T).
\label{eqn:Dupire_pre_LV2SR}
\end{equation}
For the SLV2SR model, $\sigma_S^2(t,S_t,U_t) = L^2(S_t,t) U_t$, thus the second
expectation factors to $L^2(K,T) \mathbf{E}^{\mathbb{Q}^{\text{T}}}\left[ U_T |
S_T=K  \right]$ and we obtain
\begin{equation}
 \begin{split}
\frac{\partial C}{\partial T} = & P^d(0,T)
\mathbf{E}^{\mathbb{Q}^{\text{T}}}\left[ \mathds{1}_{S_T>K} (r^d_T K - r^f_T S_T
) \right]\\
& +\frac{1}{2} K^2 \frac{\partial^2 C}{\partial K^2} L^2(K,T) 
\mathbf{E}^{\mathbb{Q}^{\text{T}}}\left[ U_T | S_T=K  \right].
\label{eqn:Dupire_pre_SLV2SR}
\end{split}   
\end{equation}
We can rewrite (\ref{eqn:Dupire_pre_LV2SR}) as
\begin{equation}
\sigma^2_{LV} (K,T)  = \frac{\frac{\partial C}{\partial T}  - P^d(0,T)
\mathbf{E}^{\mathbb{Q}^{\text{T}}}\left[(r^d_T K - r^f_T S_T )
\mathds{1}_{S_T>K} \right]  } { \frac{1}{2} K^2 \frac{\partial^2 C}{\partial
K^2}}.\label{eqn:Dupire_LV2SR}
\end{equation}
In the total implied variance parametrization of the Black-Scholes model, using
identities (\ref{eqn:CBS_first_derivatives}) together with
\begin{equation}
\begin{split}
\frac{\partial^2 C_{\text{BS}}}{\partial w^2} =& \frac{1}{2}\frac{\partial
C_{\text{BS}}}{\partial w}\left[-\frac{1}{4} -\frac{1}{w} +
\frac{y^2}{w^2}\right],\\
\frac{\partial^2 C_{\text{BS}}}{\partial w \partial y} =& \frac{\partial
C_{\text{BS}}}{\partial w} \left[-\frac{y}{w} + \frac{1}{2}\right],\\
\frac{\partial^2 C_{\text{BS}}}{\partial y^2} =& \frac{\partial
C_{\text{BS}}}{\partial y} + 2 \frac{\partial C_{\text{BS}}}{\partial w},
\end{split}
\end{equation}
one derives
\begin{equation}
\begin{split}
K^2 \frac{\partial^2 C_{\text{BS}}}{\partial K^2} =&
\frac{\partial^2 C_{\text{BS}}}{\partial y^2}
+ \left(2\frac{\partial^2 C_{\text{BS}}}{\partial w \partial y} 
+ \frac{\partial^2 C_{\text{BS}}}{\partial w^2} \frac{\partial w}{\partial y}
- \frac{\partial C_{\text{BS}}}{\partial w}
\right) \frac{\partial w}{\partial y}\\
&+ \frac{\partial C_{\text{BS}}}{\partial w} \frac{\partial^2 w}{\partial y^2}
-\frac{\partial C_{\text{BS}}}{\partial y}\\
=&2\frac{\partial C_{\text{BS}}}{\partial w} \left[
1 - \frac{y}{w} \frac{\partial w}{\partial y}
+ \frac{1}{2} \frac{\partial^2 w}{\partial y^2}
+ \frac{1}{4} \left(\frac{\partial w}{\partial y}\right)^2
\left(-\frac{1}{4}- \frac{1}{w} + \frac{y^2}{w^2}\right)
\right].
\end{split}
\end{equation}
Therefore in this parametrization (\ref{eqn:Dupire_LV2SR}) can be written as
\begin{equation}
\sigma^2_{LV} (K,T)  = \frac{\frac{\partial C_{\text{BS}}}{\partial T}  - P^d(0,T)
\mathbf{E}^{\mathbb{Q}^{\text{T}}}\left[(r^d_T K - r^f_T S_T )
\mathds{1}_{S_T>K} \right]  } { \frac{\partial C_{\text{BS}}}{\partial w} \left[
1 - \frac{y}{w} \frac{\partial w}{\partial y}
+ \frac{1}{2} \frac{\partial^2 w}{\partial y^2}
+ \frac{1}{4} \left(\frac{\partial w}{\partial y}\right)^2
\left(-\frac{1}{4}- \frac{1}{w} + \frac{y^2}{w^2}\right)
\right]},\label{eqn:Dupire_LV2SR_tiv}
\end{equation}
where $\frac{\partial C_{\text{BS}}}{\partial T}$ and $\frac{\partial
C_{\text{BS}}}{\partial w}$ are as given in (\ref{eqn:CBS_first_derivatives}).

Notice that (\ref{eqn:Dupire_LV2SR}) can be computed for any SDE system that
includes the given SDE subsystems for $(S)$, $(r^d)$ and $(r^f)$. In particular,
one can compute such a ``local volatility'' even for stochastic local volatility
models. (\ref{eqn:Dupire_pre_SLV2SR}) can be reformulated as \cite{Ogetbil2020}
\begin{equation}
L^2(K,T) \mathbf{E}^{\mathbb{Q}^{\text{T}}}\left[ U_T | S_T=K  \right]  =
\frac{\frac{\partial C}{\partial T} -  P^d(0,T) \mathbf{E}^{\mathbb{Q}^{\text{T}}}\left[(r^d_T K
- r^f_T S_T )  \mathds{1}_{S_T>K}  \right]  } { \frac{1}{2} K^2 \frac{\partial^2
C}{\partial K^2}},\label{eqn:Dupire_SLV2SR}
\end{equation}
and by comparing this to (\ref{eqn:Dupire_LV2SR}) we arrive at the relationship
between the SLV2SR model leverage function and the LV2SR submodel local
volatility function,
\begin{equation}
L^2(K,T)  =  \frac {\sigma^2_{LV} (K,T)  }
{\mathbf{E}^{\mathbb{Q}^{\text{T}}}\left[ U_T | S_T=K  \right]
}.\label{eqn:leverage_localvol_relation}
\end{equation}

\paragraph{Acknowledgments}
The authors are grateful to the reviewers and the editor for their valuable
comments that have helped us improve the paper substantially.
The authors are indebted to Dooheon Lee and Kisun Yoon for numerous enlightening
discussions and guidance about the theoretical foundations of this paper.
The authors would also like to thank Agus Sudjianto for
supporting this research, and Vijayan Nair for suggestions, feedback, and
discussion regarding this work. Any opinions, findings and conclusions or
recommendations expressed in this material are those of the authors and do not
necessarily reflect the views of Wells Fargo Bank, N.A., its parent company,
affiliates and subsidiaries.

\newpage{}
\bibliographystyle{unsrt}
\bibliography{mixed}

\end{document}